\newtheorem{theorem}{Theorem}[section]
\newtheorem{lemma}{Lemma}[section]
\newtheorem{assumption}{Assumption}[section]
\numberwithin{equation}{section}
\numberwithin{equation}{section} \makeatletter
\begin{document}

\title{\textbf{Bifurcation analysis of an age structured HIV infection model
with both virus-to-cell and cell-to-cell transmissions}}
\author{Xiangming Zhang{\small \textsc{$^{a,}$\thanks{%
Research was partially supported by NSFC (Grant No. 11471044 and 11771044) and the
Fundamental Research Funds for the Central Universities.}}} and Zhihua Liu%
{\small \textsc{$^{a,*,}$}}\textsc{\thanks{%
{\small Corresponding author. \newline \indent~~E-mail addresses: xiangmingzhang@mail.bnu.edu.cn (X. Zhang), zhihualiu@bnu.edu.cn (Z. Liu).}}} \\
$^{a}$School of Mathematical Sciences, Beijing Normal University,\\
Beijing, 100875, People's Republic of China}

\date{}
\maketitle
\begin{abstract}
We make a mathematical analysis of an age structured HIV infection model
with both virus-to-cell and cell-to-cell transmissions to understand the
dynamical behavior of HIV infection in vivo. In the model, we consider the
proliferation of uninfected CD$4^{+}$ T cells by a logistic function and the
infected CD$4^{+}$ T cells are assumed to have an infection-age structure.
Our main results concern the Hopf bifurcation of the model by using the
theory of integrated semigroup and the Hopf bifurcation theory for
semilinear equations with non-dense domain. Bifurcation analysis indicates
that there exist some parameter values such that this HIV infection model
has a non-trivial periodic solution which bifurcates from the positive
equilibrium. The numerical simulations are also carried out.

\textbf{Key words:} HIV infection model; Logistic growth; Virus-to-cell; Cell-to-cell; Age structure; Non-densely defined Cauchy problem; Hopf bifurcation

\textbf{Mathematics Subject Classification:} 34C20; 34K15; 37L10
\end{abstract}

\section{Introduction}

\noindent

The human immunodeficiency virus, HIV, gives rise to acquired immune
deficiency syndrome, AIDS. Nowadays AIDS still severely threats the people's
heath all over the world. The main target of HIV infection is a class of
lymphocytes, or white blood cells, known as CD$4^{+}$ T cells. In general,
there are two fundamental modes of viral infection and transmission, one is
the classical virus-to-cell infection and the other is direct cell-to-cell
transmission. In the classical mode, viral particles that released from
infected cells arbitrarily move around any distance to discover a new target
cell to infect. For the direct cell-to-cell transmission, HIV infection can
occur by the movement of viruses by means of direct contact between infected
cells and uninfected cells via some structures, such as membrane nanotubes
\cite{Sattentau-COV-2011}. In recent years, HIV infection model which
involves different infection modes, such as the classical virus-to-cell
infection \cite%
{PerelsonNelson-SIAMRev-1999,RJDeBoerASPerelson-JTB-1998,ZhouSongShi-AMC-2008,SongZhouZhao-AMM-2010,Browne-NARWA-2015}%
, the direct cell-to-cell transmission \cite%
{CulshawRuanWebb-JMB-2003,KomarovaWodarz-MB-2013,LaiZou-2015-JMAA}, and both
virus-to-cell infection and cell-to-cell transmission \cite%
{LaiZou-SIAM-JAM-2014,HuHuLiao-MCS-2016,WangLangZou-NARWA-2017}, has been
extensively studied by many scholars.

In population dynamics, age structure, in many situations, can affect
population size and growth in a major way because different ages indicate
different reproduction and survival abilities and different behaviors.
Generally, the progress of disease propagation and individual interactions
are modeled by using an ODE system \cite%
{PerelsonNelson-SIAMRev-1999,LaiZou-2015-JMAA} or DDE system \cite%
{ZhouSongShi-AMC-2008,SongZhouZhao-AMM-2010,HuHuLiao-MCS-2016}. When age
structure is introduced into individual interactions, population dynamical
models become considerably intricate. Recently, as the significance of age
structure in populations has become increasingly prevalent, there has been
explosively growing literature dealing with all kinds of aspects of
interacting populations with age structure \cite%
{Browne-NARWA-2015,MimmoIannelli-1995,WangLangZou-NARWA-2017,LiuLi-JNS-2015,TangLiu-AMM-2016,WangLiu-JMAA-2012, LiuMagalRuan-DCDS-B-2016,LiuTangMagal-DCDS-B-2015,FuLiuMagal-2015}%
. In particular, \cite%
{LiuLi-JNS-2015,TangLiu-AMM-2016,WangLiu-JMAA-2012,LiuMagalRuan-DCDS-B-2016,
LiuTangMagal-DCDS-B-2015,FuLiuMagal-2015} considered the age-structured
model as a non-densely defined Cauchy problem and discussed the existence of
Hopf bifurcation.

\cite{PerelsonNelson-SIAMRev-1999} considered the proliferation of
uninfected $T$ cells by a logistic function and formulated the following
model
\begin{equation*}
\left\{
\begin{array}{ccl}
\frac{dT}{dt} & = & \Lambda+rT\left(1-\frac{T}{K}\right)-\mu T-\beta_{1}VT,
\\
\frac{dI}{dt} & = & \beta_{1}VT-\sigma I, \\
\frac{dV}{dt} & = & N\sigma I-cV, \\
\end{array}
\right.
\end{equation*}
where $T$, $I$ and $V$ denote the population of uninfected CD$4^{+}$ T
cells, productively infected CD$4^{+}$ T cells and virus, respectively. They
treated the logistic proliferation term $rT\left(1-\frac{T+I}{K}\right)$ in
which $r$ is the maximum proliferation rate and $K$ is the uninfected CD$%
4^{+}$ T cells population density at which proliferation shuts off. Since
the proportion of productively infected CD$4^{+}$ T cells $I$ is very small
and thus it is reasonable to ignore this correction. They eventually
represented the proliferation of uninfected CD$4^{+}$ T cells by a logistic
function $rT\left(1-\frac{T}{K}\right)$. However, the authors only consider
the classical virus-to-cell infection and neglect the direct cell-to-cell
transmission.

\cite{WangLangZou-NARWA-2017} incorporated the two modes (virus-to-cell
infection and cell-to-cell transmission) of transmission into a classic
model and considered the following model system
\begin{equation*}
\left\{
\begin{array}{l}
\frac{dT(t)}{dt} = \Lambda-\mu
T(t)-\beta_{1}T(t)V(t)-\beta_{2}T(t)\int_{0}^{+\infty}{q(a)i(t,a)da}, \\
\frac{\partial i(t,a)}{\partial t}+\frac{\partial i(t,a)}{\partial a}%
=-\sigma(a)i(t,a), \\
\frac{dV(t)}{dt} = \int_{0}^{\infty}{p(a)i(t,a)da}-cV(t), \\
\end{array}
\right.
\end{equation*}
with the boundary and initial conditions
\begin{equation*}
\left\{
\begin{array}{l}
i(t,0)= \beta_{1}T(t)V(t)+\beta_{2}T(t)\int_{0}^{+\infty}{q(a)i(t,a)da}, \\
T(0)=x_{0}>0, \quad i(0,a)=i_{0}(a)\in L_{+}^{1}(0,\infty), \quad
V(0)=V_{0}>0, \\
\end{array}
\right.
\end{equation*}
where $T(t)$ and $V(t)$ denote the concentration of uninfected CD$4^{+}$ T
cells and infectious virus at $t$, respectively; $i(t,a)$ denotes the
concentration of infected CD$4^{+}$ T cells of infection age $a$ at time $t$%
, $\Lambda$ is the constant recruitment rate, $\mu$ is the natural death
rate of uninfected CD$4^{+}$ T cells, $\beta_{1}$ is the rate at which an
uninfected CD$4^{+}$ T cell becomes infected by an infectious virus, $c$ is
the clearance rate of virions, $\sigma(a)$ is the death rate of infected CD$%
4^{+}$ T cells related to infection age $a$, and $p(a)$ is the viral
production rate of an infected CD$4^{+}$ T cell with infection age $a$, $q(a)
$ measures variance of the infectivity of infected CD$4^{+}$ T cells with
respect to the infection age $a$. They analysed the relative compactness and
persistence of the solution semiflow and existence of a global attractor and
investigated how the rate functions $p(a)$, $q(a)$, and $\sigma(a)$ affected
the global dynamics. They do not take into account any more dynamical behaviors such as bifurcation behaviors.

Just as described above, few scholars simultaneously considered the logistic
proliferation function of uninfected CD$4^{+}$ T cells and the two
predominant infection modes of HIV in a model. As is known that the age
structure model can be considered as abstract Cauchy problems with non-dense
domain. Inspired by the papers \cite%
{PerelsonNelson-SIAMRev-1999,WangLangZou-NARWA-2017,TangLiu-AMM-2016,WangLiu-JMAA-2012}%
, we attempt to investigate the following HIV infection-age structured model
(\ref{system}) by applying the theory of integrated semigroup and Hopf
bifurcation theory \cite{LiuMagalRuan-ZAMP-2011}. A schematic diagram of the
model (\ref{system}) is shown in Figure \ref{systemfigure} and the dynamics
of such a model can be written as
\begin{figure}[tbp]
\setlength{\belowcaptionskip}{2pt} \centerline{%
\includegraphics[height=3.5cm]{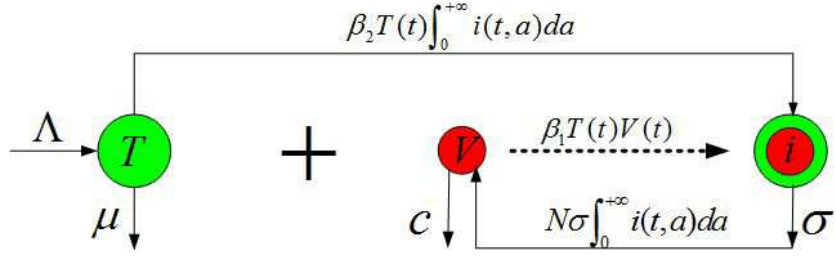}} \renewcommand{\figurename}{%
\footnotesize{Figure}}
\caption{{\protect\footnotesize {Age structured HIV infection model
flowchart.}}}
\label{systemfigure}
\end{figure}
\begin{equation}  \label{system}
\left\{
\begin{array}{l}
\frac{dT(t)}{dt}=\Lambda-\mu T(t)+rT(t)\left(1-\frac{T(t)+\int_{0}^{+\infty}{%
i(t,a)da}}{K}\right)-\beta_{1} T(t)V(t) \\
\quad\quad\quad-\beta_{2} T(t)\int_{0}^{+\infty}{i(t,a)da}, \\
\frac{dV(t)}{dt}=N\sigma\int_{0}^{+\infty}{i(t,a)da}-cV(t), \\
\frac{\partial i(t,a)}{\partial t}+\frac{\partial i(t,a)}{\partial a} =
-\sigma i(t,a), \\
i(t,0)=\beta_{1} T(t)V(t)+\beta_{2} T(t)\int_{0}^{+\infty}{\beta(a)i(t,a)da}%
,\quad t>0, \\
T(0)=T_{0}\geq0,\quad V(0)=V_{0}\geq0,\quad i(0,\cdot)=i_{0}\in L_{+}
^{1}((0, + \infty ),\mathbb{R}),%
\end{array}
\right.
\end{equation}
where $T(t)$ denotes the concentration of uninfected CD$4^{+}$ T cells at
time $t$, $i(t,a)$ denotes the concentration of infected CD$4^{+}$ T cells
of infection age $a$ at time $t$, and $V(t)$ denotes the concentration of
infectious virus at $t$. All parameters of model (\ref{system}) are positive
constants and the parameters description are presented in Table \ref%
{parametersdescription}.
\begin{table}[tbp]
\caption{The parameters description of the HIV model (\protect\ref{system}).}
\label{parametersdescription}
\centering
\doublerulesep=0.4pt
\begin{tabular}{p{2cm}p{13cm}}
\hline
\multicolumn{1}{l}{Parameter} & \multicolumn{1}{l}{Description} \\ \hline
$\Lambda$ & {\footnotesize the rate at which new CD$4^{+}$ T cells are
created from sources within the body.} \\
$r$ & {\footnotesize the maximum proliferation rate of uninfected CD$4^{+}$
T cells.} \\
$K$ & {\footnotesize the CD$4^{+}$ T cells population density at which
proliferation shuts off.} \\
$N$ & {\footnotesize the number of virons produced the infected CD$4^{+}$ T
cells during its lifetime.} \\
$\beta_{1}$ & {\footnotesize the rate at which an uninfected CD$4^{+}$ T
cell becomes infected by an infectious virus.} \\
$\beta_{2}$ & {\footnotesize the infection rate of productively infected CD$%
4^{+}$ T cells.} \\
$\mu$ & {\footnotesize the natural death rate of uninfected CD$4^{+}$ T
cells.} \\
$c$ & {\footnotesize the clearance rate of virions.} \\
$\sigma$ & {\footnotesize the death rate of infected CD$4^{+}$ T cells.} \\
\hline
\end{tabular}%
\end{table}
Throughout the paper, $\beta(a)$ is an age-specific fertility function
related to infection age $a$ and satisfies the following assumption \ref%
{assumption1}.

\begin{assumption}
\label{assumption1} Assume that
\begin{equation*}
\beta(a):=\left\{
\begin{array}{cl}
\beta^{*}, & \quad \mbox{if} \quad a\geq \tau, \\
0, & \quad \mbox{if} \quad a\in (0,\tau), \\
\end{array}
\right.
\end{equation*}
where $\tau>0$ and $\beta^{*}>0$. Moreover, it is reasonable and favorable
for the infected CD$4^{+}$ T cells to show a stable trend to assume that $%
\int_{0}^{+\infty}{\beta(a)e^{-\sigma a}da}=1$, where $e^{-\sigma a}$
denotes the probability for an infected $T$ cell to survive to age $a$.
\end{assumption}

The paper is organized as follows. In Section 2, we reformulate system (\ref%
{system}) as an abstract non-densely defined Cauchy problem and study the
equilibrium, linearized equation and characteristic equation. The existence
of Hopf bifurcation is proved in Section 3. Some numerical simulations and
conclusions are presented in Section 4.

\section{Preliminaries}

\subsection{Rescaling time and age}

\noindent

In this section, we first normalize $\tau$ in the system (\ref{system}) for the purpose of obtaining the smooth dependency of (\ref{system}) related to $\tau$ (i.e., in order to consider the parameter $\tau$ as a bifurcation parameter). By applying the following time-scaling and age-scaling
\begin{equation*}
  \hat{a}=\frac{a}{\tau} \quad \mbox{and} \quad \hat{t}=\frac{t}{\tau},
\end{equation*}
and the following distribution
\begin{equation*}
  \hat{T}(\hat{t})=T(\tau\hat{t}) ,\hat{V}(\hat{t})=V(\tau\hat{t}) \quad \mbox{and} \quad \hat{i}(\hat{t},\hat{a})=\tau i(\tau\hat{t},\tau\hat{a}),
\end{equation*}
after the change of variables and dropping the hat notation, the new system is given by
\begin{equation}\label{newsystem}
\left\{
\begin{array}{l}
\frac{dT(t)}{dt}=\tau\bigg[\Lambda-\mu T(t)+rT(t)\left(1-\frac{T(t)+ \int_{0}^{+\infty}{i(t,a)da}}{K}\right)-\beta_{1} T(t)V(t)\\
\quad\quad\quad-\beta_{2} T(t)\int_{0}^{+\infty}{i(t,a)da}\bigg],\\
\frac{dV(t)}{dt}=\tau\left(N\sigma\int_{0}^{+\infty}{i(t,a)da}-cV(t)\right),\\
\frac{\partial i(t,a)}{\partial t}+\frac{\partial i(t,a)}{\partial a} = -\tau\sigma i(t,a),\\
i(t,0)=\tau \left(\beta_{1} T(t)V(t)+\beta_{2} T(t)\int_{0}^{+\infty}{\beta(a)i(t,a)da}\right),\quad t>0,\\
T(0)=T_{0}\geq0,\quad V(0)=V_{0}\geq0,\quad i(0,\cdot)=i_{0}\in L_{+} ^{1}((0, + \infty ),\mathbb{R}),
\end{array}
\right.
\end{equation}
where the new function $\beta(\cdot)$ is defined by
\begin{equation*}
  \beta(a)=\left\{
             \begin{array}{cc}
               \beta^{*}, & \mbox{  if   } a \geq 1,\\
               0, & \mbox{  otherwise}, \\
             \end{array}
           \right.
\end{equation*}
and
\begin{equation*}
  \int_{\tau}^{+\infty}{\beta^{*}e^{-\sigma a}da}=1\Leftrightarrow \beta^{*}=\sigma e^{\sigma\tau},
\end{equation*}
where $\tau\geq0$, $\beta^{*}>0$.

\noindent

Define $U(t):=\int_{0}^{+\infty}{u(t,a)da}$ in system (\ref{newsystem}), where
$U(t)=\left(
\begin{array}{c}
T(t) \\
V(t) \\
\end{array}
\right)$
and
$u(t,a)=\left(
\begin{array}{c}
\rho(t,a) \\
v(t,a) \\
\end{array}
\right)
$,
the ordinary differential equations in (\ref{newsystem}) can easily be rewritten as an age-structured model
\begin{equation*}
  \left\{
     \begin{array}{ll}
       \frac{\partial u(t,a)}{\partial t}+\frac{\partial u(t,a)}{\partial a} =-\tau C u(t,a),\\
       u(t,0)= \tau G(\rho(t,a),v(t,a)), \\
       u(0,a)=u_{0}\in L^{1}((0,+\infty),\mathbb{R}^{2}), \\
     \end{array}
   \right.
\end{equation*}
where
\begin{equation*}
C=\left(
\begin{array}{cc}
\mu & 0 \\
0 & c \\
\end{array}
\right)
\quad \mbox{and} \quad
    \begin{array}{clc}
        G(\rho(t,a),v(t,a))
      =  \left(
      \begin{array}{c}
      G_{11}\\
      N\sigma\int_{0}^{+\infty}{i(t,a)da} \\
      \end{array}
      \right) &  \\
    \end{array}
\end{equation*}
with
\begin{equation*}
  \begin{array}{ccl}
    G_{11} & = & \Lambda +r\int_{0}^{+\infty}{\rho(t,a)da}\left(1-\frac{\int_{0}^{+\infty}{\rho(t,a)da}+ \int_{0}^{+\infty}{i(t,a)da}}{K}\right)\\
      &&-\beta_{1} \int_{0}^{+\infty}{\rho(t,a)da}\int_{0}^{+\infty}{v(t,a)da}-\beta_{2} \int_{0}^{+\infty}{\rho(t,a)da}\int_{0}^{+\infty}{i(t,a)da}. \\
  \end{array}
\end{equation*}
In what follows, with the notation $w(t,a)=\left(
                           \begin{array}{c}
                             u(t,a) \\
                             i(t,a) \\
                           \end{array}
                         \right)
$, we obtain the equivalent system of model (\ref{newsystem})
\begin{equation}\label{systempartialwta}
\left\{
  \begin{array}{l}
    \frac{\partial w(t,a)}{\partial t}+\frac{\partial w(t,a)}{\partial a} =-\tau Q w(t,a), \\
    w(t,0)=\tau B(w(t,a)), \\
    w(0,\cdot)=w_{0}=\left(
                       \begin{array}{c}
                        \rho_{0} \\
                         v_{0} \\
                         u_{0} \\
                       \end{array}
                     \right)\in L^{1}((0,+\infty),\mathbb{R}^{3}),
     \\
  \end{array}
\right.
\end{equation}
where
\begin{equation*}
      Q=\left(
        \begin{array}{ccc}
          \mu & 0 & 0 \\
          0 & c & 0 \\
          0 & 0 & \sigma \\
        \end{array}
      \right)\\
\end{equation*}
and
\begin{equation*}
    B(w(t,a))=\left(
                                   \begin{array}{c}
                                     G(\rho(t,a),v(t,a)) \\
                                     \beta_{1}\int_{0}^{+\infty}{\rho(t,a)da}\int_{0}^{+\infty}{v(t,a)da}+\beta_{2} \int_{0}^{+\infty}{\rho(t,a)da}\int_{0}^{+\infty}{\beta(a)i(t,a)da} \\
                                   \end{array}
                                 \right).\\
\end{equation*}
\noindent
Subsequently, we consider the following Banach space
\begin{equation*}
  X={\mathbb{R}}^{3} \times L^{1}{((0,+\infty),{\mathbb{R}}^{3})}
\end{equation*}
with $\left \|
\left(
  \begin{array}{c}
    \alpha \\
    \psi\\
  \end{array}
\right)
     \right \|
     =\left \|\alpha\right\|_{{\mathbb{R}}^{3}}+\left\|\psi\right\|_{L^{1}{((0,+\infty),{\mathbb{R}}^{3})}}$.
Define the linear operator $A_{\tau} : D(A_{\tau})\rightarrow X$ by
\begin{equation*}
  A_{\tau}\left(
     \begin{array}{c}
       0_{\mathbb{R}^{3}} \\
       \varphi \\
     \end{array}
   \right)
   =\left(
   \begin{array}{c}
     -\varphi(0) \\
     -\varphi'-\tau Q\varphi \\
   \end{array}
 \right)
\end{equation*}
with $D(A_{\tau})=\{0_{\mathbb{R}^{3}}\}\times W^{1,1}({(0,+\infty),{\mathbb{\mathbb{R}}}^{3}}) \subset X$, and the operator $H: \overline{D(A_{\tau})} \rightarrow X$ by
\begin{equation*}
  H\left(
  \left(
     \begin{array}{c}
       0_{\mathbb{R}^{3}} \\
       \varphi \\
     \end{array}
   \right)
   \right)
   =\left(
      \begin{array}{c}
        B(\varphi) \\
        0_{L^{1}} \\
      \end{array}
    \right).
\end{equation*}
The linear operator $A_{\tau}$ is non-densely defined owing to
\begin{equation*}
  X_{0}:=\overline{D(A_{\tau})}=\{0_{\mathbb{R}^{3}}\} \times L^{1}{((0,+\infty),{\mathbb{R}}^{3})}\neq X.
\end{equation*}
Let
\begin{equation*}
  x(t)=\left(
         \begin{array}{c}
           0_{\mathbb{R}^{3}} \\
           w(t,\cdot) \\
         \end{array}
       \right),
\end{equation*}
system (\ref{systempartialwta}) can be rewritten as the following non-densely defined abstract Cauchy problem
\begin{equation}\label{nonddaCp}
  \left\{
    \begin{array}{l}
      \frac{dx(t)}{dt}  =  A_{\tau}x(t)+\tau H(x(t)),\quad t\geq0, \\
      x(0)  =  \left(
                   \begin{array}{c}
                     0_{\mathbb{R}^{3}} \\
                     w_0 \\
                   \end{array}
                 \right)
                 \in \overline{D(A_{\tau})}. \\
    \end{array}
  \right.
\end{equation}
The global existence and uniqueness of solution of system (\ref{nonddaCp}) follow from the results of \cite{MagalRuan-ADE-2009} and \cite{Magal-EJDE-2001}.
\subsection{Equilibria and linearized equation}

\noindent

In this section, we will discuss the equilibria of the system (\ref{nonddaCp}) and the linearized equation of (\ref{nonddaCp}) around the positive equilibrium.
\subsubsection{Existence of equilibria}

\noindent

Assume that $\overline{x}(a)=\left(
                      \begin{array}{c}
                        0_{\mathbb{R}^{3}} \\
                        \overline{w}(a) \\
                      \end{array}
                    \right)
                    \in X_0
$ is a steady state of system (\ref{nonddaCp}). Then
\begin{equation*}
  \left(
     \begin{array}{c}
       0_{\mathbb{R}^{3}} \\
       \overline{w}(a) \\
     \end{array}
   \right)\in D(A_{\tau})\quad \mbox{and} \quad
   A_{\tau}\left(
      \begin{array}{c}
        0_{\mathbb{R}^{3}} \\
        \overline{w}(a) \\
      \end{array}
    \right)+\tau H\left(\left(
               \begin{array}{c}
                 0_{\mathbb{R}^{3}} \\
                 \overline{w}(a) \\
               \end{array}
             \right)\right)=0,
\end{equation*}
which is equivalent to
\begin{equation*}
\left\{
  \begin{array}{l}
    -\overline{w}(0)+\tau B(\overline{w}(a))=0, \\
    -\overline{w}^{'}(a)-\tau Q\overline{w}(a)=0. \\
  \end{array}
\right.
\end{equation*}
Solving the above equations, we obtain
\begin{equation}\label{overlinewa}
  \left.
    \begin{array}{ccccc}
      \overline{w}(a)  = \left(
                              \begin{array}{c}
                                \overline{\rho}(a) \\
                                 \overline{v}(a) \\
                                 \overline{i}(a) \\
                              \end{array}
                            \right)
        = \left(
               \begin{array}{c}
                 \tau \left[\Lambda +r\overline{T}\left(1-\frac{\overline{T}+\int_{0}^{+\infty}{\overline{i}(a)da}}{K}\right)-\beta_{1} \overline{T}\overline{V} -\beta_{2}\overline{T}\int_{0}^{\infty}{\overline{i}(a)da}\right]e^{-\tau\mu a}\\
                  \tau\left( N\sigma\int_{0}^{+\infty}{\overline{i}(a)da}\right) e^{-\tau c a} \\
                 \tau\left( \beta_{1}\overline{T}\overline{V}+\beta_{2}\overline{T}\int_{0}^{+\infty}{\beta(a)\overline{i}(a)da}\right)e^{-\tau \sigma a}\\
               \end{array}
             \right)
        \\
    \end{array}
  \right.
\end{equation}
with $\overline{T}=\int_{0}^{+\infty}{\overline{\rho}(a)da}$ and $\overline{V}=\int_{0}^{+\infty}{\overline{v}(a)da}$.

\noindent

It follows from the third equation of (\ref{overlinewa}) that
\begin{equation}\label{overlineia}
\overline{i}(a)= \tau\left( \beta_{1} \overline{T}\overline{V} +\beta_{2}\overline{T}\int_{0}^{\infty}{\beta(a)\overline{i}(a)da}\right)e^{-\tau \sigma a}.
\end{equation}
Integrating the equation (\ref{overlineia}) , we have
\begin{equation}\label{betaaia}
  \int_{0}^{+\infty}{\beta(a)\overline{i}(a)da}=\frac{\beta_{1}\overline{T}\overline{V}}{1-\beta_{2}\overline{T}} \quad \mathrm{and} \quad
  \int_{0}^{+\infty}{\overline{i}(a)da}=\frac{1}{\sigma}\int_{0}^{+\infty}{\beta(a)\overline{i}(a)da}.
\end{equation}
It follows from the second equation of (\ref{overlinewa}) that
\begin{equation}\label{overlineV}
  \overline{V}=\int_{0}^{+\infty}{\overline{v}(a)da}=\tau\left( N\sigma\int_{0}^{+\infty}{\overline{i}(a)da}\right)\int_{0}^{+\infty}{e^{-\tau c a }da}=\frac{N}{c}\int_{0}^{+\infty}{\beta(a)\overline{i}(a)da}.\\
\end{equation}
By substituting (\ref{betaaia}) and (\ref{overlineV}) into the first equation of (\ref{overlinewa}), we get
\begin{equation}\label{overlineT}
   \Lambda-\mu \overline{T}+r\overline{T}\left(1-\frac{\overline{T}+\int_{0}^{+\infty}{\overline{i}(a)da}}{K}\right)-\beta_{1} \overline{T}\overline{V} -\beta_{2}\overline{T}\int_{0}^{\infty}{\overline{i}(a)da}=0.
\end{equation}
Solving the above equations (\ref{overlineV}) and (\ref{overlineT}), we obtain
\begin{equation}\label{overlineTV}
  \left\{
   \begin{array}{l}
     \overline{T}_{\pm}=\frac{K(r-\mu)\pm\sqrt{K^{2}(\mu-r)^{2}+4rK\Lambda}}{2r}, \\
     \overline{V}=0, \\
   \end{array}
 \right.
\quad \mathrm{and} \quad
  \left\{
   \begin{array}{l}
     \overline{T}=\frac{c}{N\beta_{1}+c\beta_{2}}, \\
     \overline{V}=\frac{\sigma N\left( K(N\beta_{1}+c\beta_{2})[\Lambda(N\beta_{1}+c\beta_{2})+c(r-\mu)]-c^{2}r\right)}{c(N\beta_{1}+c\beta_{2})[K(N\sigma\beta_{1}+c\beta_{2})+ c r]}. \\
   \end{array}
 \right.
\end{equation}
Therefore, in accordance with (\ref{overlineia}) and (\ref{overlineTV}), we derive the following lemma.
\begin{lemma}
System (\ref{nonddaCp}) has always the  equilibrium
  \begin{equation*}
  \overline{x}_{01}(a)=\left(
                        \begin{array}{c}
                          0_{\mathbb{R}^{3}} \\
                          \left(
                            \begin{array}{c}
                              \tau\mu\overline{T}_{+} e^{-\tau\mu a}\\
                              0_{L^{1}}\\
                              0_{L^{1}} \\
                            \end{array}
                          \right)\\
                        \end{array}
                      \right)
  \quad \mathrm{and} \quad
\overline{x}_{02}(a)=\left(
                        \begin{array}{c}
                          0_{\mathbb{R}^{3}} \\
                          \left(
                            \begin{array}{c}
                              \tau\mu\overline{T}_{-} e^{-\tau\mu a}\\
                              0_{L^{1}}\\
                              0_{L^{1}} \\
                            \end{array}
                          \right)\\
                        \end{array}
                        \right).
\end{equation*}
Furthermore, there exists a unique positive equilibrium of system (\ref{nonddaCp})
\begin{equation*}
\overline{x}_{\tau}=\left(
                  \begin{array}{c}
                    0_{\mathbb{R}^{3}} \\
                    \overline{w}_{\tau} \\
                  \end{array}
                \right)=\left(
  \begin{array}{c}
    0_{\mathbb{R}^{3}} \\
    \left(
      \begin{array}{c}
        \tau\frac{c\mu}{N\beta_{1}+c\beta_{2}}e^{-\tau\mu a} \\
        \tau\frac{\sigma N\left( K(N\beta_{1}+c\beta_{2})[\Lambda(N\beta_{1}+c\beta_{2})+c(r-\mu)]-c^{2}r\right)}{(N\beta_{1}+c\beta_{2})[K(N\sigma\beta_{1}+c\beta_{2})+ c r]}e^{-\tau c a} \\
        \tau\frac{\sigma\left( K(N\beta_{1}+c\beta_{2})[\Lambda(N\beta_{1}+c\beta_{2})+c(r-\mu)]-c^{2}r\right)}{(N\beta_{1}+c\beta_{2})[K(N\sigma\beta_{1}+c\beta_{2})+ c r]}e^{-\tau\sigma a} \\
      \end{array}
    \right)
  \end{array}
\right)
\end{equation*}
if and only if
\begin{equation*}
  K(N\beta_{1}+c\beta_{2})[\Lambda(N\beta_{1}+c\beta_{2})+c(r-\mu)]-c^{2}r>0.
\end{equation*}
Correspondingly, there exists a unique positive equilibrium of system (\ref{system})
\begin{equation*}
  \left(
     \begin{array}{c}
       \overline{T} \\
       \overline{V} \\
       \overline{i}_{\tau}(a) \\
     \end{array}
   \right)=
   \left(
     \begin{array}{c}
       \frac{c\mu}{N\beta_{1}+c\beta_{2}} \\
       \frac{\sigma N\left( K(N\beta_{1}+c\beta_{2})[\Lambda(N\beta_{1}+c\beta_{2})+c(r-\mu)]-c^{2}r\right)}{c(N\beta_{1}+c\beta_{2})[K(N\sigma\beta_{1}+c\beta_{2})+ c r]} \\
       \tau\frac{\sigma\left( K(N\beta_{1}+c\beta_{2})[\Lambda(N\beta_{1}+c\beta_{2})+c(r-\mu)]-c^{2}r\right)}{(N\beta_{1}+c\beta_{2})[K(N\sigma\beta_{1}+c\beta_{2})+ c r]}e^{-\tau\sigma a}\\
     \end{array}
   \right)
\end{equation*}
if and only if
\begin{equation*}
  K(N\beta_{1}+c\beta_{2})[\Lambda(N\beta_{1}+c\beta_{2})+c(r-\mu)]-c^{2}r>0.
\end{equation*}
\end{lemma}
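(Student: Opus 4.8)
The plan is to reduce the stationary equation $A_{\tau}\overline{x}+\tau H(\overline{x})=0$ to the scalar algebra already assembled in (\ref{overlinewa})--(\ref{overlineTV}), and then to treat the two branches $\overline{V}=0$ and $\overline{V}\neq 0$ separately. The key preliminary observation is that, once the scalars $\overline{T}=\int_{0}^{+\infty}\overline{\rho}(a)\,da$ and $\overline{V}=\int_{0}^{+\infty}\overline{v}(a)\,da$ are known, the profile formula (\ref{overlinewa}) fixes all of $\overline{w}(a)$ and (\ref{overlineia}) fixes $\overline{i}(a)$; hence an equilibrium of (\ref{nonddaCp}) corresponds exactly to a solution $(\overline{T},\overline{V})$ of the scalar pair (\ref{overlineV})--(\ref{overlineT}) together with the closure identity (\ref{betaaia}). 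It therefore suffices to enumerate those solutions and to record which ones yield a nonnegative profile.

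On the branch $\overline{V}=0$, identities (\ref{overlineia})--(\ref{betaaia}) force $\overline{i}\equiv 0$, and (\ref{overlineT}) collapses to the quadratic $\frac{r}{K}\overline{T}^{2}-(r-\mu)\overline{T}-\Lambda=0$. Its discriminant $K^{2}(\mu-r)^{2}+4rK\Lambda$ is the sum of a nonnegative square and a strictly positive term, hence positive for all admissible parameters, so the two real roots $\overline{T}_{\pm}$ of (\ref{overlineTV}) always exist; substituting (\ref{overlineT}) back into the first line of (\ref{overlinewa}) replaces the $\overline{\rho}$-coefficient by $\tau\mu\overline{T}_{\pm}$, which gives exactly $\overline{x}_{01}$ and $\overline{x}_{02}$. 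This settles the first assertion of the lemma (no positivity being claimed for these two).

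On the branch $\overline{V}\neq 0$, inserting (\ref{betaaia}) into (\ref{overlineV}) gives $\overline{V}=\frac{N}{c}\cdot\frac{\beta_{1}\overline{T}\,\overline{V}}{1-\beta_{2}\overline{T}}$; cancelling $\overline{V}$ yields $c(1-\beta_{2}\overline{T})=N\beta_{1}\overline{T}$, i.e. the unique value $\overline{T}=\frac{c}{N\beta_{1}+c\beta_{2}}$, which moreover satisfies $\overline{T}<1/\beta_{2}$, so that (\ref{betaaia}) is consistent with $\overline{i}\geq 0$. Substituting this $\overline{T}$ into (\ref{overlineT}) leaves a linear equation for $\overline{V}$ whose unique solution is the expression displayed in (\ref{overlineTV}); its denominator $c(N\beta_{1}+c\beta_{2})[K(N\sigma\beta_{1}+c\beta_{2})+cr]$ is a product of strictly positive factors, so $\overline{V}>0$ --- equivalently, the infected profile components of $\overline{w}_{\tau}$, which carry the same numerator factor (the component $\overline{\rho}$ being positive in any case), are nonnegative and not all zero --- precisely when $K(N\beta_{1}+c\beta_{2})[\Lambda(N\beta_{1}+c\beta_{2})+c(r-\mu)]-c^{2}r>0$. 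This is the stated necessary-and-sufficient condition, and uniqueness is automatic since $\overline{T}$, then $\overline{V}$, then $\overline{i}(a)$ are each determined uniquely. Undoing the age/time rescaling of Section 2 and the substitutions $U(t)=\int_{0}^{+\infty}u(t,a)\,da$, $w=(u,i)^{T}$ then transports $\overline{x}_{\tau}$ to the triple $(\overline{T},\overline{V},\overline{i}_{\tau}(a))$ listed for system (\ref{system}), under the same inequality.

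The computations above are elementary; the only points needing a little care are checking that nonnegativity of the entire profile $\overline{w}_{\tau}$ reduces to the single scalar inequality --- which works because its infected components all inherit the common factor $K(N\beta_{1}+c\beta_{2})[\Lambda(N\beta_{1}+c\beta_{2})+c(r-\mu)]-c^{2}r$, while $\overline{\rho}$ is positive regardless --- and keeping the rescaling consistent when returning to system (\ref{system}). I do not expect a substantive obstacle here.
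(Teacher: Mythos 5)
Your proposal is correct and follows essentially the same route as the paper, which establishes the lemma through the preceding computation (\ref{overlinewa})--(\ref{overlineTV}): reduce the steady-state equation to the scalar relations for $\overline{T}$, $\overline{V}$, $\int_{0}^{+\infty}\beta(a)\overline{i}(a)\,da$, split into the $\overline{V}=0$ and $\overline{V}\neq 0$ branches, and read off the profiles and the positivity condition. Your added remarks on the discriminant, on $\overline{T}<1/\beta_{2}$, and on uniqueness only make explicit what the paper leaves implicit.
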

In the following, we always assume that $K(N\beta_{1}+c\beta_{2})[\Lambda(N\beta_{1}+c\beta_{2})+c(r-\mu)]-c^{2}r>0$.

\subsubsection{Linearized equation}

\noindent

In order to obtain the linearized equation of (\ref{nonddaCp}) around the positive equilibrium $\overline{x}_{\tau}$, we first make the following change of variable
\begin{equation*}
  y(t):= x(t)-\overline{x}_{\tau},
\end{equation*}
and then, (\ref{nonddaCp}) becomes
\begin{equation}\label{system4}
  \left\{
    \begin{array}{cll}
      \frac{dy(t)}{dt} & = & A_{\tau}y(t)+\tau H(y(t)+\overline{x}_{\tau})-\tau H(\overline{x}_{\tau}), t\geq0, \\
      y(0) & = & \left(
                   \begin{array}{c}
                     0_{\mathbb{R}^{3}} \\
                     w_0-\overline{w}_{\tau} \\
                   \end{array}
                 \right)
                 =: y_0\in \overline{D(A_{\tau})}.
       \\
    \end{array}
  \right.
\end{equation}
Therefore the linearized equation (\ref{system4}) around the equilibrium $0$ is given by
\begin{equation}\label{systemlinear}
  \begin{array}{cc}
    \frac{dy(t)}{dt}=A_{\tau}y(t)+\tau DH(\overline{x}_{\tau})y(t) &\quad\mbox{for}\quad t\geq 0,\quad y(t)\in X_{0}, \\
  \end{array}
\end{equation}
where
\begin{equation*}
    \begin{array}{cc}
      \tau DH(\overline{x}_{\tau})\left(
                        \begin{array}{c}
                          0_{\mathbb{R}^{3}} \\
                          \varphi \\
                        \end{array}
                      \right)=\left(
                                \begin{array}{c}
                                 \tau DB(\overline{w}_{\tau})(\varphi) \\
                                  0_{L^{1}} \\
                                \end{array}
                              \right)
       &\quad\mbox{for all}\quad\left(
                         \begin{array}{c}
                           0_{\mathbb{R}^{3}} \\
                           \varphi \\
                         \end{array}
                       \right)\in D(A_{\tau})\\
    \end{array}
\end{equation*}
with
\begin{equation*}
     \begin{array}{ccl}
       DB(\overline{w}_{\tau})(\varphi) & = & \left(
                              \begin{array}{ccc}
                                r-\frac{2r\overline{T}}{K}-(\frac{r}{K}+\beta_{2})\int_{0}^{+\infty}{\overline{i}(a)da}-\beta_{1}\overline{V} & -\beta_{1}\overline{T}&-(\frac{r}{K}+\beta_{2})\overline{T}\\
                                0&0&N\sigma\\
                                \beta_{1}\overline{V}+\beta_{2}\int_{0}^{+\infty}{\beta(a)\overline{i}(a)da}&\beta_{1}\overline{T}&0\\
                              \end{array}
                            \right)\\
                            &&\times\int_{0}^{+\infty}{\varphi(a)da}
         +\left(
                              \begin{array}{ccc}
                                0&0&0 \\
                                0&0&0\\
                                0&0&\beta_{2}\overline{T}\\
                              \end{array}
                            \right)
        \int_{0}^{+\infty}{\beta(a)\varphi(a)da}. \\
     \end{array}
\end{equation*}
Then we can rewrite system (\ref{system4}) as
\begin{equation}\label{fracdytdt}
    \begin{array}{cc}
      \frac{dy(t)}{dt}=B_{\tau}y(t)+\mathcal{H}(y(t)) &\quad \mbox{for}\quad t\geq0, \\
    \end{array}
\end{equation}
where
\begin{equation*}
  B_{\tau}:=A_{\tau}+\tau DH(\overline{x}_{\tau})
\end{equation*}
is a linear operator and
\begin{equation*}
  \mathcal{H}(y(t))=\tau H(y(t)+\overline{x}_{\tau})-\tau H(\overline{x}_{\tau})-\tau DH(\overline{x}_{\tau})y(t)
\end{equation*}
satisfying $\mathcal{H}(0)=0$ and $D\mathcal{H}(0)=0$.

\subsection{Characteristic equation}

\noindent

In this section, we will obtain the characteristic equation of (\ref{nonddaCp}) around the positive equilibrium $\overline{x}_{\tau}$.
Denote
\begin{equation*}
  \nu:=\min\{\mu, c, \sigma\}>0 \quad\mbox{and}\quad \Omega := \{\lambda \in \mathbb{C} : Re(\lambda)>-\nu\tau\}.
\end{equation*}
Applying the results of \cite{LiuMagalRuan-ZAMP-2011}, we obtain the following result.
\begin{lemma}
For $\lambda\in \Omega$, $\lambda\in \rho(A_{\tau})$ and
\begin{equation*}
  (\lambda I-A_{\tau})^{-1}\left(
                               \begin{array}{c}
                                 \delta \\
                                 \psi\\
                               \end{array}
                             \right)
                             =\left(
                                \begin{array}{c}
                                  0_{\mathbb{R}^{3}} \\
                                  \varphi \\
                                \end{array}
                              \right)
                              \Leftrightarrow
                              \varphi(a)=e^{-\int_{0}^{a}{(\lambda I+\tau Q)dl}}\delta+\int_{0}^{a}{e^{-\int_{s}^{a}{(\lambda I+\tau Q)dl}}\psi(s)}ds
\end{equation*}
with $\left(
        \begin{array}{c}
          \delta \\
          \psi \\
        \end{array}
      \right)
      \in X
$ and $\left(
         \begin{array}{c}
           0_{\mathbb{R}^{3}} \\
           \varphi \\
         \end{array}
       \right)
       \in D(A_{\tau})
$. Moreover, $A_{\tau}$ is a Hille-Yosida operator and
\begin{equation}\label{Hille-Yosida}
  \left\|(\lambda I-A_{\tau})^{-n}\right\|\leq\frac{1}{(Re(\lambda)+\nu\tau)^{n}},\quad\forall \lambda\in\Omega,\quad\forall n\geq 1.
\end{equation}
\end{lemma}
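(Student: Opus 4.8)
\emph{Proof proposal.} The plan is to solve the resolvent equation $(\lambda I-A_{\tau})\binom{0_{\mathbb{R}^{3}}}{\varphi}=\binom{\delta}{\psi}$ explicitly for every $\lambda\in\Omega$ and every $\binom{\delta}{\psi}\in X$, to check that the solution is unique and lies in $D(A_{\tau})$, and then to read both norm estimates off the explicit formula. Unwinding the definition of $A_{\tau}$, the equation is equivalent to $\varphi(0)=\delta$ together with the linear first-order system $\varphi'(a)+(\lambda I+\tau Q)\varphi(a)=\psi(a)$ on $(0,+\infty)$. Since $Q=\mathrm{diag}(\mu,c,\sigma)$, the fundamental matrix $e^{-(\lambda I+\tau Q)a}=\mathrm{diag}\!\big(e^{-(\lambda+\tau\mu)a},e^{-(\lambda+\tau c)a},e^{-(\lambda+\tau\sigma)a}\big)$ satisfies $\|e^{-(\lambda I+\tau Q)a}\|\le e^{-(\mathrm{Re}(\lambda)+\nu\tau)a}$ for $a\ge 0$, because each of $\mu,c,\sigma$ is $\ge\nu$ and, by the definition of $\Omega$, $\mathrm{Re}(\lambda)+\nu\tau>0$. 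Variation of constants then yields the candidate $\varphi(a)=e^{-(\lambda I+\tau Q)a}\delta+\int_{0}^{a}e^{-(\lambda I+\tau Q)(a-s)}\psi(s)\,ds$, which is exactly the claimed formula since $\int_{s}^{a}(\lambda I+\tau Q)\,dl=(\lambda I+\tau Q)(a-s)$.

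Next I would verify that this $\varphi$ really defines an element of $D(A_{\tau})$. Using the decay estimate above, the first term is in $L^{1}$ and the convolution term is in $L^{1}$ by Young's inequality (equivalently Fubini), so $\varphi\in L^{1}((0,+\infty),\mathbb{R}^{3})$; then $\varphi'=\psi-(\lambda I+\tau Q)\varphi\in L^{1}$, whence $\varphi\in W^{1,1}((0,+\infty),\mathbb{R}^{3})$ and $\binom{0_{\mathbb{R}^{3}}}{\varphi}\in D(A_{\tau})$. A direct substitution confirms it solves the resolvent equation, and uniqueness is immediate: any solution of the homogeneous equation with $\varphi(0)=0$ satisfies $\varphi(a)=e^{-(\lambda I+\tau Q)a}\cdot 0=0$. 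Hence $\lambda\in\rho(A_{\tau})$ and the resolvent is given by the stated formula.

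For the Hille--Yosida estimate I would equip $\mathbb{R}^{3}$ with the $\ell^{1}$-norm, so that the norm on $X$ agrees with the one fixed in the text. For $n=1$, bound the two terms of $\varphi$ separately: $\int_{0}^{\infty}\|e^{-(\lambda I+\tau Q)a}\delta\|\,da\le\|\delta\|\int_{0}^{\infty}e^{-(\mathrm{Re}(\lambda)+\nu\tau)a}\,da=\|\delta\|/(\mathrm{Re}(\lambda)+\nu\tau)$, and, interchanging the order of integration over $\{0\le s\le a\}$, $\int_{0}^{\infty}\!\int_{0}^{a}e^{-(\mathrm{Re}(\lambda)+\nu\tau)(a-s)}\|\psi(s)\|\,ds\,da=\|\psi\|_{L^{1}}/(\mathrm{Re}(\lambda)+\nu\tau)$; adding gives $\|(\lambda I-A_{\tau})^{-1}\|\le 1/(\mathrm{Re}(\lambda)+\nu\tau)$. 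For $n\ge 2$, I would use that the range of $(\lambda I-A_{\tau})^{-1}$ lies in $D(A_{\tau})\subset X_{0}=\overline{D(A_{\tau})}$, so that $(\lambda I-A_{\tau})^{-n}=\big[(\lambda I-A_{\tau})^{-1}|_{X_{0}}\big]^{n-1}(\lambda I-A_{\tau})^{-1}$; on $X_{0}$ one has $\delta=0$, and the same Fubini computation gives $\|(\lambda I-A_{\tau})^{-1}|_{X_{0}}\|_{\mathcal{L}(X_{0})}\le 1/(\mathrm{Re}(\lambda)+\nu\tau)$, so the product bound yields $\|(\lambda I-A_{\tau})^{-n}\|\le 1/(\mathrm{Re}(\lambda)+\nu\tau)^{n}$. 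Since $(-\nu\tau,+\infty)\subset\Omega$, this is precisely the Hille--Yosida property, with constant $M=1$ and type $\omega=-\nu\tau$.

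None of the steps is genuinely deep, and the computation parallels the standard treatment of the generator of an age-structured problem as in \cite{LiuMagalRuan-ZAMP-2011}. The two points that demand a little care are the choice of the $\ell^{1}$-norm on $\mathbb{R}^{3}$, which is what makes the diagonal structure of $Q$ and the worst-case exponent $\nu=\min\{\mu,c,\sigma\}$ combine into the clean bound, and the passage from $n=1$ to general $n$, which relies on the observation that the iterates of the resolvent factor through $X_{0}$, where the $\delta$-contribution disappears; I would flag the latter as the only step where something could be mishandled.
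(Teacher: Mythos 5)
Your proposal is correct; the paper itself gives no proof of this lemma, simply citing the results of \cite{LiuMagalRuan-ZAMP-2011}, and your explicit variation-of-constants computation (solve $\varphi(0)=\delta$, $\varphi'+(\lambda I+\tau Q)\varphi=\psi$, check $W^{1,1}$ membership and uniqueness, then estimate via Fubini using $\|e^{-(\lambda I+\tau Q)a}\|\le e^{-(\mathrm{Re}(\lambda)+\nu\tau)a}$) is exactly the standard argument behind that citation. One small remark: the step you flag for $n\ge 2$ is not actually delicate, since the constant in the $n=1$ bound is $M=1$, submultiplicativity $\|(\lambda I-A_{\tau})^{-n}\|\le\|(\lambda I-A_{\tau})^{-1}\|^{n}$ already gives \eqref{Hille-Yosida} without passing through $X_{0}$.
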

\noindent
Let $A_0$ be the part of $A_{\tau}$ in $\overline{D(A_{\tau})}$, that is, $A_0 := D(A_0)\subset X \rightarrow X$. For $\left(
                                                                                                    \begin{array}{c}
                                                                                                      0_{\mathbb{R}^{3}} \\
                                                                                                      \varphi \\
                                                                                                    \end{array}
                                                                                                  \right)
                                                                                                  \in D(A_0)
$, we have
\begin{equation*}
  A_0\left(
       \begin{array}{c}
         0_{\mathbb{R}^{3}} \\
         \varphi \\
       \end{array}
     \right)
     =\left(
        \begin{array}{c}
          0_{\mathbb{R}^{3}} \\
          \hat{A_0}(\varphi) \\
        \end{array}
      \right),
\end{equation*}
where $\hat{A_0}(\varphi)=-\varphi '-\tau Q\varphi$ with $D(\hat{A_0})=\{\varphi \in W^{1,1}((0,+\infty),{\mathbb{R}}^{3}): \varphi(0)=0\}$.

\noindent

Note that $\tau DH(\overline{x}_{\tau}):D(A_{\tau}) \subset X \rightarrow X$ is a compact bounded linear operator. Based on (\ref{Hille-Yosida}) we have
\begin{equation*}
  \left\| T_{A_0}(t) \right\| \leq e^{-\nu\tau t} \quad\mbox{for}\quad t \geq 0.
\end{equation*}
Furthermore, we get
\begin{equation*}
  \omega_{0,ess}(A_0)\leq\omega_0(A_{0})\leq -\nu\tau.
\end{equation*}
Using the perturbation results developed in \cite{DucrotLiuMagal-JMAA-2008}, we obtain
\begin{equation*}
  \omega_{0,ess}((A_{\tau}+\tau DH(\overline{x}_{\tau}))_{0})\leq-\nu\tau<0.
\end{equation*}
Hence we conclude the following proposition.
\begin{lemma}
The linear operator $B_{\tau}$ is a Hille-Yosida operator, and its parts $(B_{\tau})_{0}$ in
$\overline{D(B_{\tau})}$ satisfies
\begin{equation*}
  \omega_{0,ess}((B_{\tau})_{0})<0.
\end{equation*}
\end{lemma}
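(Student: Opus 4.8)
The plan is to deduce both assertions directly from the perturbation facts assembled immediately above the statement, by viewing $B_{\tau}$ as a bounded perturbation $B_{\tau} = A_{\tau} + \tau DH(\overline{x}_{\tau})$ of the Hille--Yosida operator $A_{\tau}$, whose resolvent is described explicitly and which satisfies the estimate \eqref{Hille-Yosida}. First I would record why $\tau DH(\overline{x}_{\tau})$ is a bounded \emph{and compact} linear operator on $X$: from its explicit formula, $DB(\overline{w}_{\tau})$ acts on $\varphi \in L^{1}((0,+\infty),\mathbb{R}^{3})$ only through the two linear functionals $\varphi \mapsto \int_{0}^{+\infty}\varphi(a)\,da$ and $\varphi \mapsto \int_{0}^{+\infty}\beta(a)\varphi(a)\,da$, followed by multiplication with fixed $3\times 3$ matrices, so $\tau DH(\overline{x}_{\tau})$ has finite rank. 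In particular $\overline{D(B_{\tau})} = \overline{D(A_{\tau})} = X_{0}$, and the bounded-perturbation theorem for Hille--Yosida operators (see \cite{MagalRuan-ADE-2009, Magal-EJDE-2001}) gives that $B_{\tau}$ is again a Hille--Yosida operator with $D(B_{\tau}) = D(A_{\tau})$; this settles the first claim.

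For the second claim I would pass to the part $(B_{\tau})_{0}$ in $X_{0}$. From the estimate \eqref{Hille-Yosida}, the part $A_{0}$ of $A_{\tau}$ in $X_{0}$ generates a $C_{0}$-semigroup with $\|T_{A_{0}}(t)\| \le e^{-\nu\tau t}$ for $t\geq 0$, hence
\begin{equation*}
\omega_{0,ess}(A_{0}) \le \omega_{0}(A_{0}) \le -\nu\tau < 0 .
\end{equation*}
Since $\tau DH(\overline{x}_{\tau})$ is compact, the invariance of the essential growth rate under compact perturbation developed in \cite{DucrotLiuMagal-JMAA-2008} yields
\begin{equation*}
\omega_{0,ess}\big((B_{\tau})_{0}\big) = \omega_{0,ess}\big((A_{\tau}+\tau DH(\overline{x}_{\tau}))_{0}\big) \le \omega_{0,ess}(A_{0}) \le -\nu\tau < 0 ,
\end{equation*}
which is the desired estimate.

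The only point requiring genuine care is verifying the hypotheses of the two perturbation theorems in the non-densely defined setting — in particular that the compactness of $\tau DH(\overline{x}_{\tau})$, regarded as an operator from $X$ (equivalently, from $X_{0}$) into $X$, suffices to preserve the essential growth bound of the \emph{part} even though $A_{\tau}$ is not densely defined. Once the finite-rank structure of $DH(\overline{x}_{\tau})$ is noted, this is exactly the situation covered in \cite{DucrotLiuMagal-JMAA-2008}, so no further computation is needed; the remainder is just bookkeeping of the constant $\nu = \min\{\mu,c,\sigma\} > 0$.
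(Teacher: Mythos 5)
Your proposal is correct and follows essentially the same route as the paper: compactness (indeed finite rank) of $\tau DH(\overline{x}_{\tau})$, the decay estimate $\|T_{A_0}(t)\|\le e^{-\nu\tau t}$ giving $\omega_{0,ess}(A_0)\le-\nu\tau$, and the perturbation result of \cite{DucrotLiuMagal-JMAA-2008} to conclude $\omega_{0,ess}((B_{\tau})_0)<0$, with the Hille--Yosida property coming from bounded perturbation. You merely spell out the finite-rank structure and the bounded-perturbation step more explicitly than the paper does.
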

\noindent
Let $\lambda\in \Omega$. Since $(\lambda I-A_{\tau})$ is invertible, and
\begin{equation}\label{invertible}
  \begin{array}{ccl}
    (\lambda I-B_{\tau})^{-1} & = & (\lambda I-(A_{\tau}+\tau DH(\overline{x}_{\tau})))^{-1} \\
     & = & (\lambda I-A_{\tau})^{-1}(I-\tau DH(\overline{x}_{\tau})(\lambda I-A_{\tau})^{-1})^{-1}, \\
  \end{array}
\end{equation}
$\lambda I-B_{\tau}$ is invertible if and only if $I-\tau DH(\overline{x}_{\tau})(\lambda I-A_{\tau})^{-1}$ is invertible.
Set
\begin{equation*}
  (I-\tau DH(\overline{x}_{\tau})(\lambda I-A_{\tau})^{-1})\left(
     \begin{array}{c}
       \delta \\
       \varphi \\
     \end{array}
   \right)
   =\left(
      \begin{array}{c}
        \gamma \\
        \psi \\
      \end{array}
    \right).
\end{equation*}
That is
\begin{equation*}
  \left(
    \begin{array}{l}
      \delta \\
      \varphi \\
    \end{array}
  \right)
  -\tau DH(\overline{x}_{\tau})(\lambda I-A_{\tau})^{-1}\left(
                                                \begin{array}{c}
                                                  \delta \\
                                                  \varphi \\
                                                \end{array}
                                              \right)
                                              =\left(
                                                 \begin{array}{c}
                                                   \gamma \\
                                                   \psi \\
                                                 \end{array}
                                               \right).
\end{equation*}
It follows that
\begin{equation*}
 \left\{
    \begin{array}{l}
      \delta-\tau DB(\overline{w}_{\tau})\left(e^{-\int_{0}^{a}{(\lambda I+\tau Q)dl}}\delta+\int_{0}^{a}{e^{-\int_{s}^{a}{(\lambda I+\tau Q)dl}}\varphi(s)}ds\right)=\gamma, \\
      \varphi=\psi, \\
    \end{array}
  \right.
\end{equation*}
i.e.,
\begin{equation*}
  \left\{
    \begin{array}{l}
 \delta-\tau DB(\overline{w}_{\tau})\left(e^{  -\int_{0}^{a}{(\lambda I+\tau Q)dl}}\delta\right)=\gamma+\tau DB(\overline{w}_{\tau})\left(\int_{0}^{a}{e^{-\int_{s}^{a}{(\lambda I+\tau Q)dl}}\varphi(s)}ds\right), \\
      \varphi=\psi. \\
    \end{array}
  \right.
\end{equation*}
Combining with the formula of $DB(\overline{w}_{\tau})$ we conclude that
\begin{equation*}
  \left\{
    \begin{array}{l}
      \Delta(\lambda)\delta=\gamma+K(\lambda,\psi), \\
      \varphi=\psi, \\
    \end{array}
  \right.
\end{equation*}
where
\begin{equation}\label{Deltalambda}
   \begin{array}{ccl}
     \Delta(\lambda) & = & I-\left(
                              \begin{array}{ccc}
                                r-\frac{2r\overline{T}}{K}-(\frac{r}{K}+\beta_{2})\int_{0}^{+\infty}{\overline{i}(a)da}-\beta_{1}\overline{V} & -\beta_{1}\overline{T}&-(\frac{r}{K}+\beta_{2})\overline{T}\\
                                0&0&N\sigma\\
                                \beta_{1}\overline{V}+\beta_{2}\int_{0}^{+\infty}{\beta(a)\overline{i}(a)da}&\beta_{1}\overline{T}&0\\
                              \end{array}
                            \right)\\
                            &&\times\tau\int_{0}^{+\infty}{e^{-\int_{0}^{a}{(\lambda I+\tau Q)dl}}da} \\
      &  &  -\left(
      \begin{array}{ccc}
                                0&0&0 \\
                                0&0&0\\
                                0&0&\beta_{2}\overline{T}\\
                              \end{array}
                            \right)\tau
                             \int_{0}^{+\infty}{\beta(a)e^{-\int_{0}^{a}{(\lambda I+\tau Q)dl}}}da \\
   \end{array}
\end{equation}
and
\begin{equation}\label{Klambdapsi}
  K(\lambda,\psi)=\tau DB(\overline{w}_{\tau})\left(\int_{0}^{a}{e^{-\int_{s}^{a}{(\lambda I+\tau Q)dl}}\psi(s)}ds\right).
\end{equation}
Whenever $\Delta(\lambda)$ is invertible, we have
\begin{equation}\label{xi}
  \delta=(\Delta(\lambda))^{-1}(\gamma+K(\lambda,\psi)).
\end{equation}
Following the above discussion and the proof of Lemma 3.5 in \cite{WangLiu-JMAA-2012}, we derive the lemma as follows.
\begin{lemma}
The following results hold
\begin{itemize}
  \item [(i)] $\sigma(B_{\tau})\cap\Omega=\sigma_{p}(B_{\tau})\cap\Omega=\{\lambda\in\Omega: \det(\Delta(\lambda))=0\}$;
  \item [(ii)] If $\lambda\in\rho(B_{\tau})\cap\Omega$, we have the following formula for resolvent
  \begin{equation}\label{lambdalambdaI}
    (\lambda I -B_{\tau})^{-1}\left(
                                \begin{array}{c}
                                  \delta \\
                                  \varphi \\
                                \end{array}
                              \right)
                              =\left(
                                 \begin{array}{c}
                                   0_{\mathbb{R}^{2}} \\
                                   \psi \\
                                 \end{array}
                               \right),
  \end{equation}
  where
  \begin{equation*}
    \psi(a)= e^{-\int_{0}^{a}{(\lambda I+\tau Q)dl}}(\Delta(\lambda))^{-1}\left[\gamma+K(\lambda,\varphi)\right]+\int_{0}^{a}
    {e^{-\int_{s}^{a}{(\lambda I+\tau Q)dl}}}\varphi(s)ds
  \end{equation*}
  with $\Delta(\lambda)$ and $K(\lambda,\varphi)$ defined in (\ref{Deltalambda}) and (\ref{Klambdapsi}).
\end{itemize}
\end{lemma}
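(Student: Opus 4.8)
The plan is to use the factorization (\ref{invertible}) to reduce the spectral analysis of $B_{\tau}$ in $\Omega$ to the invertibility of the finite-dimensional matrix $\Delta(\lambda)$. Since $\tau DH(\overline{x}_{\tau})$ is a bounded linear operator, $D(B_{\tau})=D(A_{\tau})$, and by the earlier lemma $\Omega\subset\rho(A_{\tau})$ with $(\lambda I-A_{\tau})^{-1}$ given there explicitly. Hence, for every $\lambda\in\Omega$, the identity (\ref{invertible}) shows that $\lambda I-B_{\tau}$ is invertible on $X$ if and only if $I-\tau DH(\overline{x}_{\tau})(\lambda I-A_{\tau})^{-1}$ is invertible on $X$.

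First I would carry out exactly the computation preceding the statement: writing $(I-\tau DH(\overline{x}_{\tau})(\lambda I-A_{\tau})^{-1})(\delta,\varphi)^{T}=(\gamma,\psi)^{T}$, substituting the resolvent formula for $(\lambda I-A_{\tau})^{-1}$ and the explicit form of $DB(\overline{w}_{\tau})$, and separating components yields the triangular system $\Delta(\lambda)\delta=\gamma+K(\lambda,\psi)$, $\varphi=\psi$, with $\Delta(\lambda)$ and $K(\lambda,\psi)$ as in (\ref{Deltalambda})--(\ref{Klambdapsi}). Because the $\varphi$-component is always uniquely determined, $I-\tau DH(\overline{x}_{\tau})(\lambda I-A_{\tau})^{-1}$ is invertible precisely when the $3\times 3$ matrix $\Delta(\lambda)$ is invertible, i.e. when $\det(\Delta(\lambda))\neq 0$. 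This gives $\rho(B_{\tau})\cap\Omega=\{\lambda\in\Omega:\det(\Delta(\lambda))\neq 0\}$, and therefore $\sigma(B_{\tau})\cap\Omega=\{\lambda\in\Omega:\det(\Delta(\lambda))=0\}$.

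To identify this spectrum with the point spectrum, I would use that $\tau DH(\overline{x}_{\tau})$ is compact (as noted before the statement), so $\tau DH(\overline{x}_{\tau})(\lambda I-A_{\tau})^{-1}$ is compact for each $\lambda\in\Omega$ and $I-\tau DH(\overline{x}_{\tau})(\lambda I-A_{\tau})^{-1}$ is a Fredholm operator of index zero. Consequently, whenever it fails to be invertible it has a nontrivial kernel, which via (\ref{invertible}) produces a nontrivial kernel of $\lambda I-B_{\tau}$; thus every point of $\sigma(B_{\tau})\cap\Omega$ is an eigenvalue, which yields $\sigma(B_{\tau})\cap\Omega=\sigma_{p}(B_{\tau})\cap\Omega$ and completes (i). I expect this Fredholm step, together with the bookkeeping needed to transfer kernels back and forth through (\ref{invertible}), to be the main point requiring care; the remaining arguments follow the proof of Lemma 3.5 in \cite{WangLiu-JMAA-2012}.

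For (ii), when $\lambda\in\rho(B_{\tau})\cap\Omega$ the matrix $\Delta(\lambda)$ is invertible, so (\ref{xi}) gives $\delta=(\Delta(\lambda))^{-1}(\gamma+K(\lambda,\varphi))$; inserting this together with $\varphi=\psi$ into the explicit formula for $(\lambda I-A_{\tau})^{-1}$ from the earlier lemma and using the factorization (\ref{invertible}) reproduces the stated expression for the resolvent, namely $\psi(a)=e^{-\int_{0}^{a}(\lambda I+\tau Q)dl}(\Delta(\lambda))^{-1}[\gamma+K(\lambda,\varphi)]+\int_{0}^{a}e^{-\int_{s}^{a}(\lambda I+\tau Q)dl}\varphi(s)\,ds$. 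This is a direct substitution and presents no real difficulty.
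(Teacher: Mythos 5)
Your proposal is correct and follows essentially the same route as the paper, which simply invokes the computation preceding the lemma (the reduction of $I-\tau DH(\overline{x}_{\tau})(\lambda I-A_{\tau})^{-1}$ to the system $\Delta(\lambda)\delta=\gamma+K(\lambda,\psi)$, $\varphi=\psi$) together with the argument of Lemma 3.5 in \cite{WangLiu-JMAA-2012}. The only remark is that your Fredholm-index-zero step is heavier than necessary: since $\Delta(\lambda)$ is a $3\times 3$ matrix, $\det(\Delta(\lambda))=0$ directly yields a nonzero kernel vector $\delta$, and $(\lambda I-A_{\tau})^{-1}\left(\begin{smallmatrix}\delta\\0\end{smallmatrix}\right)$ is then an eigenvector of $B_{\tau}$, which already gives $\sigma(B_{\tau})\cap\Omega=\sigma_{p}(B_{\tau})\cap\Omega$.
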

\noindent
Under Assumption \ref{assumption1}, we have
\begin{equation}\label{intea}
  \int_{0}^{+\infty}{e^{-\int_{0}^{a}({\lambda I+\tau Q})dl}}da=
  \left(
    \begin{array}{ccc}
      \frac{1}{\lambda+\mu\tau} & 0 &0\\
      0 & \frac{1}{\lambda+c\tau}&0 \\
      0&0& \frac{1}{\lambda+\sigma\tau}
    \end{array}
  \right)
\end{equation}
and
\begin{equation}\label{intbetaea}
  \int_{0}^{+\infty}{\beta(a)e^{-\int_{0}^{a}({\lambda I+\tau Q})dl}}da=
  \left(
    \begin{array}{ccc}
      \frac{\beta^{*}e^{-(\lambda+\mu\tau)}}{\lambda+\mu\tau} & 0 &0\\
      0 & \frac{\beta^{*}e^{-(\lambda+c\tau)}}{\lambda+c\tau}&0 \\
      0&0&\frac{\beta^{*}e^{-(\lambda+\sigma\tau)}}{\lambda+\sigma\tau}
    \end{array}
  \right).
\end{equation}
It follows from (\ref{Deltalambda}), (\ref{intea}) and (\ref{intbetaea}) that the characteristic equation at the positive equilibrium $\overline{x}_{\tau}$ is
\begin{equation}\label{characteristicequation}
     \begin{array}{ccl}
       \det(\Delta(\lambda)) & = & \left|
  \begin{array}{ccc}
                  \tau\frac{\beta_{1}\overline{V}+\frac{2r\overline{T}}{K}+\left(\beta_{2}+\frac{r}{K}\right)\frac{\xi}{\sigma}-r}{\mu\tau+\lambda}+1& \frac{\tau\beta_{1}\overline{T}}{c\tau+\lambda}&\frac{\tau\overline{T}\left(\beta_{2}+\frac{r}{K}\right)}{\sigma\tau+\lambda} \\
                  0&1&-\frac{\tau N\sigma}{\sigma\tau+\lambda}\\
                  -\tau\frac{\beta_{1}\overline{V}+\beta_{2}\xi}{\mu\tau+\lambda}&-\frac{\tau\beta_{1}\overline{T}}{c\tau+\lambda}&-\frac{\tau\beta_{2}\sigma\overline{T} e^{-\lambda}}{\sigma\tau+\lambda}+1\\
                 \end{array}
  \right| \\
  & = & \frac{ \lambda^{3}+\tau p_{2}\lambda^{2}+\tau^{2} p_{1}\lambda+\tau^{3}p_{0}+(\tau q_{2}\lambda^{2} +\tau^{2} q_{1}\lambda +\tau^{3}q_{0})e^{-\lambda}}{(\lambda+\tau\mu)(\lambda+\tau c)(\lambda+\tau\sigma)} \\
        & \triangleq & \frac{\tilde{f}(\lambda)}{\tilde{g}(\lambda)}=0, \\
     \end{array}
\end{equation}
where
\begin{equation*}
     \begin{array}{ccl}
       \overline{T}& = &\int_{0}^{+\infty}{\overline{\rho}(a)da},\\
       \overline{V}& = &\int_{0}^{+\infty}{\overline{v}(a)da},\\
       \xi & = & \int_{0}^{+\infty}{\beta(a)\overline{i}(a)da}=\sigma\int_{0}^{+\infty}{\overline{i}(a)da}, \\
       p_{2} & = & \beta_{1}\overline{V}+\mu+c+\sigma-r+\frac{\beta_{2}\xi}{\sigma}+\frac{2r\overline{T}}{K}+\frac{r\xi}{K\sigma}, \\

       p_{1} & = & \beta_{1}\overline{T}(\beta_{2}\overline{V}-\sigma N)+\beta_{2}\xi(\beta_{2}\overline{T}+1)
       +(c+\sigma)(\beta_{1}\overline{V}+\mu-r)+c\sigma \\
       &&+\frac{r}{K}[\overline{T}(\beta_{1}\overline{V}+\beta_{2}\xi)+2\overline{T}(c+\sigma)+\xi]+\frac{c\xi}{K\sigma}(K\beta_{2}+r),\\

       p_{0} & = & N\beta_{1}\beta_{2}\xi\overline{T}(\sigma-1)+\sigma(r-\mu)(N\beta_{1}\overline{T}-c)
       +c\beta_{2}\overline{T}(\beta_{1}\overline{V}+\beta_{2}\xi)\\
       &&+c(\beta_{1}\sigma\overline{V}+\beta_{2}\xi)
       -\frac{r}{K}[(N\beta_{1}\overline{T}-c)(2\sigma\overline{T}+\xi)- c\overline{T}(\beta_{1}\overline{V}+\beta_{2}\xi)],\\

       q_{2} & = & -\sigma\beta_{2}\overline{T}, \\

       q_{1} & = & -\beta_{2}\overline{T}\big[\sigma(\beta_{1}\overline{V}+\mu+c-r)+\beta_{2}\xi+\frac{r}{K}(\xi+2\sigma\overline{T})\big], \\

       q_{0} & = & -c\beta_{2}\overline{T}[\sigma(\beta_{1}\overline{V}+\mu-r)+\beta_{2}\xi+\frac{r}{K}(2\sigma\overline{T}+\xi)], \\
       \tilde{f}(\lambda) & = & \lambda^{3}+\tau p_{2}\lambda^{2}+\tau^{2} p_{1}\lambda+\tau^{3}p_{0}+(\tau q_{2}\lambda^{2} +\tau^{2} q_{1}\lambda +\tau^{3}q_{0})e^{-\lambda}, \\

       \tilde{g}(\lambda) & = & (\lambda+\tau\mu)(\lambda+\tau c)(\lambda+\tau\sigma). \\
     \end{array}
\end{equation*}
Let
\begin{equation*}
  \lambda=\tau\zeta.
\end{equation*}
Then we get
\begin{equation}\label{characteristicequationg}
  \tilde{f}(\lambda)=\tilde{f}(\tau\zeta):=\tau^{3}g(\zeta)=\tau^{3}\big[\zeta^{3}+ p_{2}\zeta^{2}+p_{1}\zeta+ p_{0}+(q_{2}\zeta^{2} +q_{1}\zeta +q_{0})e^{-\tau\zeta}\big].
\end{equation}
It is straightforward to demonstrate that
\begin{equation*}
  \{\lambda\in\Omega:\det(\Delta(\lambda))=0\}=\{\lambda=\tau\zeta\in\Omega:g(\zeta)=0\}.
\end{equation*}

\section{Existence of Hopf bifurcation}

\noindent

In this section, the parameter $\tau$ will be viewed as a Hopf bifurcation parameter and the existence of Hopf bifurcation for the Cauchy problem (\ref{nonddaCp}) will be further investigated by applying the Hopf bifurcation theory \cite{LiuMagalRuan-ZAMP-2011}.
On the basis of (\ref{characteristicequationg}), we have
\begin{equation}\label{characteristicequation3}
 g(\zeta)=\zeta^{3}+ p_{2}\zeta^{2}+p_{1}\zeta+ p_{0}+(q_{2}\zeta^{2}+q_{1}\zeta +q_{0})e^{-\tau\zeta},
\end{equation}
where
\begin{equation}\label{p1p2p3p4p5p6}
    \begin{array}{ccl}
      p_{2} & = & \frac{K(N\beta_{1}+c\beta_{2})[\Lambda(N\beta_{1}+c\beta_{2})+c(c+\sigma)]+c^{2}r}{cK(N\beta_{1}+c\beta_{2})}, \\

      p_{1} & = & \frac{\mathcal{P}}{cK(N\beta_{1}+c\beta_{2})[K(N\beta_{1}\sigma+c\beta_{2})+ c r]}, \\

       p_{0} & = & \frac{\sigma\left(K(N\beta_{1}+c\beta_{2})^{2}[\Lambda(N\beta_{1}+2c\beta_{2})+c(r-\mu)]-N\beta_{1}c^{2}r\right)}{K(N\beta_{1}+c\beta_{2})^{2}}, \\

      q_{2} & = & -\frac{c\sigma\beta_{2}}{N\beta_{1}+c\beta_{2}}, \\

      q_{1} & = & -\frac{\sigma \beta_{2}\left(K(N\beta_{1}+c\beta_{2})[\Lambda(N\beta_{1}+c\beta_{2})+c^{2}]+c^{2}r\right)}
      {K(N\beta_{1}+c\beta_{2})^{2}}, \\

      q_{0} & = & -\frac{c\sigma \beta_{2}[K\Lambda(N\beta_{1}+c\beta_{2})^{2}+c^{2}r]}{K(N\beta_{1}+c\beta_{2})^{2}}, \\
    \end{array}
\end{equation}
and
\begin{equation*}
\begin{array}{ccl}
  \mathcal{P} &=& K^{2}\Lambda(N\beta_{1}+c\beta_{2})^{2}[N\beta_{1}\sigma(c+\sigma)+c\beta_{2}(c+2\sigma)]+K^{2}N\beta_{1}\beta_{2}c^{2}\sigma(c\sigma+r-\mu)\\
      &&+K^{2}\beta_{2}^{2}c^{3}\sigma(c+r-\mu)+K\Lambda c r(N\beta_{1}+c\beta_{2})^{2}(c+2\sigma)+KN\beta_{1}c^{2}\sigma r(c+\sigma+r-\mu)\\
      &&+K\beta_{2}c^{3}r[(c+r-\mu)\sigma+c)]+c^{4}r^{2}.\\
\end{array}
\end{equation*}
In addition,
\begin{equation}\label{p0q0}
  \begin{array}{ccl}
    p_{0}+q_{0} & = & \frac{\sigma\left( K(N\beta_{1}+c\beta_{2})[\Lambda(N\beta_{1}+c\beta_{2})+c(r-\mu)]
    -c^{2}r\right)}{K(N\beta_{1}+c\beta_{2})}, \\
  \end{array}
\end{equation}
If $K(N\beta_{1}+c\beta_{2})[\Lambda(N\beta_{1}+c\beta_{2})+c(r-\mu)]-c^{2}r>0$, then $p_{0}+q_{0}>0$, $p_{0}-q_{0}>0$ and
$\zeta =0$ is not a eigenvalue of (\ref%
{characteristicequation3}).

In what follows, we first let $\zeta=i\omega (\omega>0)$ be a purely imaginary root of $g(\zeta)=0$, that is,
\begin{equation*}
  -i\omega^{3}-p_{2}\omega^{2}+ip_{1}\omega+p_{0}+(-q_{2}\omega^{2}+iq_{1}\omega+q_{0})e^{-i\omega\tau}=0.
\end{equation*}
Separating the real part and the imaginary part in the above equation, we have
\begin{equation}\label{realimaginary}
  \left\{
     \begin{array}{l}
       p_{2}\omega^{2}-p_{0}=(q_{0}-q_{2}\omega^{2})\cos(\omega\tau)+q_{1}\omega\sin(\omega\tau), \\
       p_{1}\omega-\omega^{3}=(q_{0}-q_{2}\omega^{2})\sin(\omega\tau)-q_{1}\omega\cos(\omega\tau). \\
     \end{array}
   \right.
\end{equation}
 Consequently, we can further obtain
\begin{equation*}
  (p_{2}\omega^{2}-p_{0})^{2}+(p_{1}\omega-\omega^{3})^{2}=(q_{0}-q_{2}\omega^{2})^2+(q_{1}\omega)^{2},
\end{equation*}
i.e.,
\begin{equation}\label{omega34}
  \omega^{6}+(p_{2}^{2}-q_{2}^{2}-2p_{1})\omega^{4}+(p_{1}^{2}-q_{1}^{2}-2p_{2}p_{0}+2q_{2}q_{0})\omega^{2}+p_{0}^{2}-q_{0}^{2}=0.
\end{equation}
Set $\omega^{2}=\theta$. Now (\ref{omega34}) becomes
\begin{equation}\label{theta2}
 \theta^{3}+C_{2}\theta^{2}+C_{1}\theta+C_{0}=0,
\end{equation}
where
\begin{equation}\label{C2C1C0}
     \begin{array}{ccl}
       C_{2} & = & p_{2}^{2}-q_{2}^{2}-2p_{1}, \\
       C_{1} & = & p_{1}^{2}-q_{1}^{2}-2p_{2}p_{0}+2q_{2}q_{0}, \\
       C_{0} & = & p_{0}^{2}-q_{0}^{2}. \\
     \end{array}
\end{equation}
Let $\theta_{1}$, $\theta_{2}$ and $\theta_{3}$ denote the three roots of (\ref{theta2}). According to the theorem of
Vieta, we have
\begin{equation*}
  \theta_{1}+\theta_{2}+\theta_{3}=-C_{2}\quad\mbox{and}\quad \theta_{1}\theta_{2}\theta_{3}=-C_{0}.
\end{equation*}
Combing with (\ref{p0q0}) and (\ref{C2C1C0}), we can get that
\begin{equation}\label{theta1theta2theta3}
  \theta_{1}\theta_{2}\theta_{3}=-C_{0}=-(p_{0}+q_{0})(p_{0}-q_{0})<0.
\end{equation}
Denote
\begin{equation}\label{D}
  D:=p^{3}+q^{2},
\end{equation}
where
\begin{equation*}
       q  =  \frac{C_{2}^{3}}{27}-\frac{C_{2}C_{1}}{6}+\frac{C_{0}}{2} \quad\mbox{and}\quad
       p  =  \frac{C_{1}}{3}-\frac{C_{2}^{2}}{9}.
\end{equation*}
The quantity (\ref{D}) is named the discriminant of (\ref{theta2}). Table \ref{Cubicequations} describes the
behavior of the solutions of (\ref{theta2}) under the condition that the coefficients are real.
\begin{table}
\caption{Cubic equations with real coefficients.}\label{Cubicequations}
\centering
\doublerulesep=0.4pt
\begin{tabular}{p{4cm}p{7cm}}
\hline
\multicolumn{1}{l}{Discriminant} & \multicolumn{1}{l}{Description}\\ \hline
$D>0$ & {\footnotesize one real and two conjugate complex zeros.}\\
$D<0$ & {\footnotesize three distinct real zeros.}\\
$D=0, q\neq0$ & {\footnotesize two real zeros, one of which is double.}\\
$D=0, q=0$ & {\footnotesize one triple real zero.}\\
\hline
\end{tabular}
\end{table}
The solutions of (\ref{theta2}) can be given by
\begin{equation}\label{solutiongamma}
     \begin{array}{ccl}
       \theta_{1} & = & \sqrt[3]{-q+\sqrt{D}}+\sqrt[3]{-q-\sqrt{D}}, \\
       \theta_{2} & = & \frac{-1+i\sqrt{3}}{2}\sqrt[3]{-q+\sqrt{D}}+\frac{-1-i\sqrt{3}}{2}\sqrt[3]{-q-\sqrt{D}}, \\
       \theta_{3} & = & \frac{-1-i\sqrt{3}}{2}\sqrt[3]{-q+\sqrt{D}}+\frac{-1+i\sqrt{3}}{2}\sqrt[3]{-q-\sqrt{D}}. \\
     \end{array}
\end{equation}
And then, it follows from (\ref{theta1theta2theta3}), Table \ref{Cubicequations} and (\ref{solutiongamma}) that when $D=0$ and $q>0$, (\ref{theta2}) has only one double positive real root $\theta_{0}$. Therefore (\ref{omega34}) has only one positive real root $\omega_{0}=\sqrt{\theta_{0}}$. On the basis of (\ref{realimaginary}), we can further conclude that $g(\zeta)=0$ with $\tau=\tau_{k}$, $k=1,2,\cdots$ has a pair of purely imaginary roots $\pm i\omega_{0}$, where
\begin{equation*}
  \omega_{0}=\sqrt{-\sqrt[3]{-q}}
\end{equation*}
and
\begin{equation}\label{tauk}
  \tau_{k}=\left\{
             \begin{array}{l}
               \frac{1}{\omega_{0}}\left\{\arccos\left(-\frac{(p_{2}q_{2}-q_{1})\omega_{0}^{4}-(p_{2}q_{0}+p_{0}q_{2}-p_{1}q_{1})\omega_{0}^{2}+p_{0}q_{0}}
               {q_{2}^{2}\omega_{0}^{4}+(q_{1}^{2}-2q_{2}q_{0})\omega_{0}^{2}+q_{0}^{2}}\right)+2k\pi\right\},
               \mbox{  if  } \Theta\geq 0,\\
                \frac{1}{\omega_{0}}\left\{2\pi-\arccos\left(-\frac{(p_{2}q_{2}-q_{1})\omega_{0}^{4}-(p_{2}q_{0}+p_{0}q_{2}-p_{1}q_{1})\omega_{0}^{2}+p_{0}q_{0}}
               {q_{2}^{2}\omega_{0}^{4}+(q_{1}^{2}-2q_{2}q_{0})\omega_{0}^{2}+q_{0}^{2}}\right)+2k\pi\right\},
                \mbox{  if  } \Theta< 0,\\
             \end{array}
           \right.
\end{equation}
for $k=1,2,\cdots$ with
\begin{equation*}
  \Theta:=-\frac{\omega_{0}[q_{2}\omega_{0}^{4}+(p_{2}q_{1}-p_{1}q_{2}-q_{0})\omega_{0}^{2}+p_{1}q_{0}-p_{0}q_{1}]}
               {q_{2}^{2}\omega_{0}^{4}+(q_{1}^{2}-2q_{2}q_{0})\omega_{0}^{2}+q_{0}^{2}}.
\end{equation*}
\begin{assumption}\label{assumption2}
  Assume that $K(N\beta_{1}+c\beta_{2})[\Lambda(N\beta_{1}+c\beta_{2})+c(r-\mu)]-c^{2}r>0$, $D=0$, $q>0$, $C_{2}>0$ and $C_{1}>0$ where $D$, $q$, $C_{2}$ and $C_{1}$ are given by (\ref{D}) and (\ref{C2C1C0}).
\end{assumption}
\begin{lemma}
Let Assumption \ref{assumption1} and \ref{assumption2} hold, then
\begin{equation*}
  \frac{dg(\zeta)}{d\zeta}\Big|_{\zeta=i\omega_{0}}\neq0.
\end{equation*}
Therefore $\zeta=i\omega_{0}$ is a simple root of (\ref{characteristicequation3}).
\end{lemma}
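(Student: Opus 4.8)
The plan is to show $\dfrac{dg(\zeta)}{d\zeta}\big|_{\zeta=i\omega_{0}}\neq 0$ by a direct computation combined with the structural information that Assumption~\ref{assumption2} forces $\theta_{0}=\omega_{0}^{2}$ to be a \emph{double} root of the cubic \eqref{theta2}. Differentiating \eqref{characteristicequation3} gives
\begin{equation*}
  g'(\zeta)=3\zeta^{2}+2p_{2}\zeta+p_{1}+\big(2q_{2}\zeta+q_{1}-\tau(q_{2}\zeta^{2}+q_{1}\zeta+q_{0})\big)e^{-\tau\zeta}.
\end{equation*}
First I would use the relation $g(i\omega_{0})=0$, which in the form \eqref{realimaginary} expresses $e^{-i\omega_{0}\tau}$ (equivalently $\cos\omega_{0}\tau$ and $\sin\omega_{0}\tau$) in terms of the $p_{j}$, $q_{j}$ and $\omega_{0}$. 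Substituting this into the $e^{-\tau\zeta}$ factor of $g'(i\omega_{0})$ eliminates the transcendental term and reduces $g'(i\omega_{0})$ to a rational expression in $\omega_{0}$ and the coefficients; this is exactly the kind of reduction carried out for $\tau_{k}$ and $\Theta$ above, so the algebra, while lengthy, is routine and self-contained.

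The key conceptual step is the link with the double root. Suppose for contradiction that $g'(i\omega_{0})=0$. Since $g$ has real coefficients, $\pm i\omega_{0}$ would then both be (at least) double roots of $g$. I would translate this into a statement about \eqref{omega34}: recall that \eqref{omega34}, and hence \eqref{theta2} after setting $\theta=\omega^{2}$, was obtained precisely as $|\text{(real part)}|^{2}+|\text{(imaginary part)}|^{2}$ of $g(i\omega)=0$, i.e. as the resultant-type condition $P(\omega)^{2}+Q(\omega)^{2}=0$ where $P,Q$ are the two sides of \eqref{realimaginary}. A short argument shows that $i\omega_{0}$ being a multiple root of $g$ forces $\omega_{0}$ to be a multiple root of $\omega\mapsto P(\omega)^2+Q(\omega)^2$ of order $\ge$ one higher than generic — concretely, $\theta_{0}$ would have to be a root of \eqref{theta2} of multiplicity $\ge 3$, i.e. a triple root. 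But by Table~\ref{Cubicequations} a triple real root of \eqref{theta2} occurs only when $D=0$ \emph{and} $q=0$, whereas Assumption~\ref{assumption2} stipulates $D=0$ and $q>0$; and moreover \eqref{theta1theta2theta3} gives $\theta_{1}\theta_{2}\theta_{3}=-C_{0}<0$, which is incompatible with a triple (necessarily positive) root $\theta_{0}$ since $\theta_{0}^{3}>0$. This contradiction yields $g'(i\omega_{0})\neq 0$.

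I would then simply assemble these pieces: under Assumptions~\ref{assumption1}--\ref{assumption2} the cubic \eqref{theta2} has $\theta_{0}$ as a double but not triple root (positive by the argument following \eqref{solutiongamma}), hence $g$ cannot have $i\omega_{0}$ as a root of multiplicity $\ge 2$, hence $g'(i\omega_{0})\neq 0$, i.e. $i\omega_{0}$ is a simple root of \eqref{characteristicequation3}. The main obstacle I anticipate is making the implication ``$i\omega_{0}$ double root of $g$ $\Rightarrow$ $\theta_{0}$ triple root of \eqref{theta2}'' fully rigorous: one must be careful that squaring-and-adding can create spurious multiplicity, so the cleanest route is to verify directly from the explicit formula for $g'(i\omega_{0})$ (after eliminating $e^{-i\omega_{0}\tau}$ via \eqref{realimaginary}) that its vanishing is equivalent to the simultaneous vanishing of the derivative of the left-hand side of \eqref{theta2} at $\theta_{0}$ together with $q=0$, and then invoke $q>0$. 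If that direct identity proves too unwieldy, the fallback is the purely real-analytic multiplicity bookkeeping sketched above, which only needs \eqref{theta1theta2theta3} and Table~\ref{Cubicequations}.
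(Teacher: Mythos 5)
Your central step is the implication ``$i\omega_{0}$ a double root of $g$ $\Rightarrow$ $\theta_{0}=\omega_{0}^{2}$ a root of (\ref{theta2}) of multiplicity $\geq 3$'', and this is exactly where the argument breaks. What $g(i\omega_{0})=g'(i\omega_{0})=0$ actually gives is one order less: writing $F(\omega)=|P(i\omega)|^{2}-|Q(i\omega)|^{2}=\omega^{6}+C_{2}\omega^{4}+C_{1}\omega^{2}+C_{0}$ with $P(\zeta)=\zeta^{3}+p_{2}\zeta^{2}+p_{1}\zeta+p_{0}$, $Q(\zeta)=q_{2}\zeta^{2}+q_{1}\zeta+q_{0}$, and substituting $P(\pm i\omega_{0})=-Q(\pm i\omega_{0})e^{\mp i\omega_{0}\tau}$ and $P'(\pm i\omega_{0})=-[Q'(\pm i\omega_{0})-\tau Q(\pm i\omega_{0})]e^{\mp i\omega_{0}\tau}$ into $F'(\omega_{0})$, one finds only $F'(\omega_{0})=0$, i.e.\ $\theta_{0}$ has multiplicity $\geq 2$ in (\ref{theta2}). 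That conclusion carries no contradiction in your scheme: under Assumption \ref{assumption2} ($D=0$, $q>0$) the cubic is \emph{designed} to have a double root, and the multiplicity of $\theta_{0}$ in the $\tau$-independent cubic is a parameter coincidence, not something that stacks additively with the multiplicity of $i\omega_{0}$ as a root of $g(\cdot,\tau_{k})$; there is no ``one higher than generic'' rule. Your fallback claim, that vanishing of $g'(i\omega_{0})$ is equivalent to the simultaneous vanishing of the derivative of (\ref{theta2}) at $\theta_{0}$ \emph{and} of the Cardano quantity $q$, is asserted without proof and I see no reason it should hold. If you want to rescue your line of attack, the rigorous version stops at multiplicity two: $g=g'=0$ at $i\omega_{0}$ forces $3\omega_{0}^{4}+2C_{2}\omega_{0}^{2}+C_{1}=0$, which is impossible when $C_{2}>0$, $C_{1}>0$, $\omega_{0}>0$ (the same quantity that drives the transversality lemma) --- the contradiction must come from $C_{1},C_{2}>0$, not from Table \ref{Cubicequations} or from $C_{0}>0$.

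The paper's own proof is entirely different and far more elementary: from the implicit-differentiation identity $g'(\zeta)\frac{d\zeta}{d\tau}=\zeta(q_{2}\zeta^{2}+q_{1}\zeta+q_{0})e^{-\tau\zeta}$, the hypothesis $g'(i\omega_{0})=0$ forces $i\omega_{0}(q_{2}(i\omega_{0})^{2}+q_{1}i\omega_{0}+q_{0})e^{-i\omega_{0}\tau}=0$, hence $q_{0}-q_{2}\omega_{0}^{2}=0$ and $q_{1}=0$; this contradicts the explicit formula in (\ref{p1p2p3p4p5p6}), which gives $q_{1}<0$ for all positive parameter values. So the paper's contradiction comes from the sign of a coefficient of the characteristic function, with no multiplicity bookkeeping for the cubic at all; your proposal, as written, does not reach a contradiction.
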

\begin{proof}
On the basis of (\ref{characteristicequation3}), we obtain
\begin{equation*}
 \frac{dg(\zeta)}{d\zeta}\Big|_{\zeta=i\omega_{0}}=\left\{3\zeta^{2}+2p_{2}\zeta+p_{1}+[2q_{2}\zeta+q_{1}
 -\tau(q_{2}\zeta^{2}+q_{1}\zeta+q_{0})]e^{-\tau\zeta}\right\}\Big|_{\zeta=i\omega_{0}}
\end{equation*}
and
\begin{equation*}
\left\{3\zeta^{2}+2p_{2}\zeta+p_{1}+[2q_{2}\zeta+q_{1}
 -\tau(q_{2}\zeta^{2}+q_{1}\zeta+q_{0})]e^{-\tau\zeta}\right\}\frac{d\zeta(\tau)}{d\tau}
 =\zeta(q_{2}\zeta^{2} +q_{1}\zeta+q_{0})e^{-\tau\zeta}.
\end{equation*}
Suppose that $\frac{dg(\zeta)}{d\zeta}\Big|_{\zeta=i\omega_{0}}=0$, then
\begin{equation*}
  i\omega_{0}(-q_{2}\omega_{0}^{2}+iq_{1}\omega_{0}+q_{0})e^{-i\omega_{0}\tau}=0.
\end{equation*}
Separating real and imaginary parts in the above equation, we have
\begin{equation}
  \left\{
     \begin{array}{l}
       (q_{0}\omega_{0}-q_{2}\omega_{0}^{3})\sin(\omega_{0}\tau)-q_{1}\omega_{0}^{2}\cos(\omega_{0}\tau)=0, \\
       (q_{0}\omega_{0}-q_{2}\omega_{0}^{3})\cos(\omega_{0}\tau)+q_{1}\omega_{0}^{2}\sin(\omega_{0}\tau)=0. \\
     \end{array}
   \right.
\end{equation}
That is,
\begin{equation*}
  (q_{0}\omega_{0}-q_{2}\omega_{0}^{3})^{2}+(q_{1}\omega_{0}^{2})^{2}=0,
\end{equation*}
which implies
\begin{equation*}
  q_{0}\omega_{0}-q_{2}\omega_{0}^{3}=0 \quad \mathrm{and}\quad q_{1}\omega_{0}^{2}=0.
\end{equation*}
Since $\omega_{0}>0$, it follows that
\begin{equation*}
  q_{0}-q_{2}\omega_{0}^{2}=0 \quad \mathrm{and}\quad q_{1}=0.
\end{equation*}
However, $q_{1}=-\frac{\sigma \beta_{2}\left(K(N\beta_{1}+c\beta_{2})[\Lambda(N\beta_{1}+c\beta_{2})+c^{2}]+c^{2}r\right)}
      {K(N\beta_{1}+c\beta_{2})^{2}}<0$ which leads to a contradiction. Hence
\begin{equation*}
  \frac{dg(\zeta)}{d\zeta}\Big|_{\zeta=i\omega_{0}}\neq0.
\end{equation*}
This completes the proof.
\end{proof}
\begin{lemma}
  Let Assumption \ref{assumption1} and \ref{assumption2} hold. Denote the root $\zeta(\tau)=\alpha(\tau)+i\omega(\tau)$ of $g(\zeta)=0$ satisfying $\alpha(\tau_{k})=0$ and $\omega(\tau_{k})=\omega_{0}$, where $\tau_{k}$ is defined in (\ref{tauk}). Then
\begin{equation*}
  \alpha^{'}(\tau_{k})=\frac{\mbox{d}Re(\zeta)}{\mbox{d}\tau}\Big|_{\tau=\tau_{k}}>0.
\end{equation*}
\end{lemma}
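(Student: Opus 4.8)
The plan is to compute the sign of $\alpha'(\tau_k)$ via the standard transversality argument for delay-type characteristic equations, namely by differentiating $g(\zeta(\tau))=0$ implicitly in $\tau$ and evaluating at $\tau=\tau_k$, $\zeta=i\omega_0$. From the differentiation identity already displayed in the proof of the previous lemma, namely
\begin{equation*}
\bigl\{3\zeta^{2}+2p_{2}\zeta+p_{1}+[2q_{2}\zeta+q_{1}-\tau(q_{2}\zeta^{2}+q_{1}\zeta+q_{0})]e^{-\tau\zeta}\bigr\}\frac{d\zeta}{d\tau}=\zeta(q_{2}\zeta^{2}+q_{1}\zeta+q_{0})e^{-\tau\zeta},
\end{equation*}
I would solve for $(d\zeta/d\tau)^{-1}$ and, since $\mathrm{sign}\,\alpha'(\tau_k)=\mathrm{sign}\,\mathrm{Re}(d\zeta/d\tau)|_{\tau_k}=\mathrm{sign}\,\mathrm{Re}\bigl((d\zeta/d\tau)^{-1}\bigr)|_{\tau_k}$, work with the reciprocal to avoid dividing by the exponential. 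Writing $(d\zeta/d\tau)^{-1}=\dfrac{3\zeta^{2}+2p_{2}\zeta+p_{1}}{\zeta(q_{2}\zeta^{2}+q_{1}\zeta+q_{0})e^{-\tau\zeta}}+\dfrac{2q_{2}\zeta+q_{1}}{\zeta(q_{2}\zeta^{2}+q_{1}\zeta+q_{0})}-\dfrac{\tau}{\zeta}$, the last term is purely imaginary at $\zeta=i\omega_0$ and contributes nothing to the real part.

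Next I would eliminate the exponential in the first term using the characteristic equation itself: at $\zeta=i\omega_0$, $g=0$ gives $e^{-\tau\zeta}=-\dfrac{\zeta^{3}+p_{2}\zeta^{2}+p_{1}\zeta+p_{0}}{q_{2}\zeta^{2}+q_{1}\zeta+q_{0}}$, so the first term becomes $-\dfrac{3\zeta^{2}+2p_{2}\zeta+p_{1}}{\zeta(\zeta^{3}+p_{2}\zeta^{2}+p_{1}\zeta+p_{0})}$. Thus, up to a positive factor,
\begin{equation*}
\mathrm{sign}\,\alpha'(\tau_k)=\mathrm{sign}\,\mathrm{Re}\!\left[-\frac{3\zeta^{2}+2p_{2}\zeta+p_{1}}{\zeta(\zeta^{3}+p_{2}\zeta^{2}+p_{1}\zeta+p_{0})}+\frac{2q_{2}\zeta+q_{1}}{\zeta(q_{2}\zeta^{2}+q_{1}\zeta+q_{0})}\right]_{\zeta=i\omega_0}.
\end{equation*}
Substituting $\zeta=i\omega_0$, rationalizing each fraction by multiplying numerator and denominator by the conjugate of the denominator, and collecting real parts, I expect both bracketed terms to reduce to expressions in $\omega_0^2=\theta_0$ and the coefficients $p_j,q_j$. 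The key simplification I anticipate is that the resulting numerator is, up to sign, proportional to $h'(\theta_0)$ where $h(\theta):=\theta^3+C_2\theta^2+C_1\theta+C_0$ is the cubic in \eqref{theta2}, evaluated at the double root $\theta_0$; this is the familiar phenomenon that the transversality direction is governed by the derivative of the real polynomial whose roots give the crossing frequencies. Indeed $(p_2\omega^2-p_0)^2+(p_1\omega-\omega^3)^2-(q_0-q_2\omega^2)^2-(q_1\omega)^2 = \omega^6+C_2\omega^4+C_1\omega^2+C_0 = h(\omega^2)$, and differentiating this identity in $\omega^2$ links $\mathrm{Re}(d\zeta/d\tau)^{-1}$ to $h'(\theta_0)$.

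The main obstacle — and the place requiring care rather than cleverness — is establishing that $h'(\theta_0)>0$ at the double root under Assumption \ref{assumption2}. Since $D=0$ and $q>0$, the cubic $h$ has a double root $\theta_0=-\sqrt[3]{-q}\cdot\frac{?}{}$ (more precisely $\theta_0$ is the repeated root identified from \eqref{solutiongamma}) and a simple root $\theta_3$; by Vieta $\theta_0^2\,\theta_3=-C_0=-(p_0+q_0)(p_0-q_0)<0$, and since $\theta_0>0$ we get $\theta_3<0$. Then $h(\theta)=(\theta-\theta_0)^2(\theta-\theta_3)$, so $h'(\theta_0)=(\theta_0-\theta_3)^2>0$ — wait, $h'(\theta_0)=2(\theta_0-\theta_0)(\theta_0-\theta_3)+(\theta_0-\theta_0)^2=0$; the correct computation is $h'(\theta)=2(\theta-\theta_0)(\theta-\theta_3)+(\theta-\theta_0)^2$ so $h'(\theta_0)=0$. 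The right object is therefore not $h'(\theta_0)$ but rather $h''(\theta_0)/2=\theta_0-\theta_3>0$, or equivalently one must track that the transversality quantity picks up $(\theta_0-\theta_3)$ through a second-order expansion because $\theta_0$ is a double root; I will need to redo the real-part computation keeping this in mind, and the positivity will then follow from $\theta_0>0>\theta_3$ together with the sign conditions $C_2>0$, $C_1>0$, which guarantee $\theta_0$ is the only positive root. I would present the final formula for $\alpha'(\tau_k)$ as a manifestly positive ratio whose numerator equals $\omega_0^2(\theta_0-\theta_3)$ times a positive denominator, thereby completing the proof; the bookkeeping in rationalizing the two complex fractions and matching them against the derivative identity for $h$ is the only laborious part.
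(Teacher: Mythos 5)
Your setup coincides with the paper's: differentiate $g(\zeta(\tau))=0$ implicitly, pass to $d\tau/d\zeta$, use $g(i\omega_{0})=0$ to eliminate $e^{-\tau\zeta}$, drop the purely imaginary term $-\tau/\zeta$, and take real parts; the paper additionally uses the modulus identity $(p_{2}\omega_{0}^{2}-p_{0})^{2}+(p_{1}\omega_{0}-\omega_{0}^{3})^{2}=(q_{0}-q_{2}\omega_{0}^{2})^{2}+(q_{1}\omega_{0})^{2}$ coming from (\ref{realimaginary}) to put both fractions over the common denominator $(q_{1}\omega_{0})^{2}+(q_{0}-q_{2}\omega_{0}^{2})^{2}$, after which the numerator collapses to $3\omega_{0}^{4}+2(p_{2}^{2}-q_{2}^{2}-2p_{1})\omega_{0}^{2}+p_{1}^{2}-q_{1}^{2}-2p_{2}p_{0}+2q_{2}q_{0}=3\omega_{0}^{4}+2C_{2}\omega_{0}^{2}+C_{1}$. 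The paper then concludes in one line: since Assumption \ref{assumption2} explicitly imposes $C_{2}>0$ and $C_{1}>0$, and $\omega_{0}>0$, this quantity is strictly positive. That decisive step is exactly what your proposal never carries out.

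Instead, having (correctly) observed that the numerator is $h'(\omega_{0}^{2})$ for $h(\theta)=\theta^{3}+C_{2}\theta^{2}+C_{1}\theta+C_{0}$, you argue that because $\theta_{0}$ is a double root one must pass to a second-order expansion and that the transversality is governed by $h''(\theta_{0})/2=\theta_{0}-\theta_{3}$. This is a genuine gap, for two reasons. First, the computation is never finished: you explicitly defer ``redoing the real-part computation,'' so the positivity claimed in the lemma is not established anywhere in the proposal. Second, the proposed repair cannot work as stated: $\alpha'(\tau_{k})$ is determined by the first-order quantity $\mathrm{Re}\left(\frac{d\tau}{d\zeta}\big|_{\zeta=i\omega_{0}}\right)$, whose numerator is $h'(\omega_{0}^{2})$; if that quantity actually vanished, then $\alpha'(\tau_{k})=0$ and the strict inequality of the lemma would simply fail --- second-order information about $h$ cannot change the value of a first derivative. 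The proof the paper intends, and the one available under the stated hypotheses, is the direct sign estimate $3\omega_{0}^{4}+2C_{2}\omega_{0}^{2}+C_{1}>0$ from $C_{2}>0$, $C_{1}>0$, with no reference to the factorization of $h$ or to $\theta_{3}$. (Your observation does point to a real tension in the paper --- if $\omega_{0}^{2}$ were literally a double root of (\ref{theta2}), then $h'(\omega_{0}^{2})$ could not be positive --- but your $h''$ argument does not resolve it, and it is not the route by which the stated lemma is proved.)
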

\begin{proof}
For simplicity, we discuss $\frac{d\tau}{d\zeta}$ instead of $\frac{d\zeta}{d\tau}$. Based on the expression of $g(\zeta)=0$, we obtain
\begin{equation*}
  \begin{array}{cll}
    \frac{d\tau}{d\zeta}\Big|_{\zeta=i\omega_{0}} & = & \frac{3\zeta^{2}+2p_{2}\zeta+p_{1}+(2q_{2}\zeta+q_{1})e^{-\tau\zeta}-\tau(q_{2}\zeta^{2}+q_{1}\zeta+q_{0})e^{-\tau\zeta}}{\zeta(q_{2}\zeta^{2} +q_{1}\zeta+q_{0})e^{-\tau\zeta}}\Big|_{\zeta=i\omega_{0}} \\
     & = & \left(-\frac{3\zeta^{2}+2p_{2}\zeta+p_{1}}{\zeta(\zeta^{3}+p_{2}\zeta^{2}+p_{1}\zeta+p_{0})}+\frac{2q_{2}\zeta+q_{1}}{\zeta(q_{2}\zeta^{2} +q_{1}\zeta+q_{0})}-\frac{\tau}{\zeta}\right)\Big|_{\zeta=i\omega_{0}} \\
     & = &-\frac{-3\omega_{0}^{2}+i2p_{2}\omega_{0}+p_{1}}{i\omega_{0}(-i\omega_{0}^{3}-p_{2}\omega_{0}^{2}+ip_{1}\omega_{0}+p_{0})}
     +\frac{i2q_{2}\omega_{0}+q_{1}}{i\omega_{0}(-q_{2}\omega_{0}^{2}+iq_{1}\omega_{0}+q_{0})}-\frac{\tau}{i\omega_{0}}\\
     & = &-\frac{1}{\omega_{0}}\frac{(p_{1}-3\omega_{0}^{2})+i2p_{2}\omega_{0}}{\omega_{0}(\omega_{0}^{2}-p_{1})+i(p_{0}-p_{2}\omega_{0}^{2})}
     -\frac{1}{\omega_{0}}\frac{q_{1}+i2q_{2}\omega_{0}}{q_{1}\omega_{0}-i(q_{0}-q_{2}\omega_{0}^{2})}-\frac{\tau}{i\omega_{0}}\\
     & = & -\frac{1}{\omega_{0}}\frac{[(p_{1}-3\omega_{0}^{2})+i2p_{2}\omega_{0}][\omega_{0}(\omega_{0}^{2}-p_{1})-i(p_{0}-p_{2}\omega_{0}^{2})]}{[\omega_{0}(\omega_{0}^{2}-p_{1})]^{2}+(p_{0}-p_{2}\omega_{0}^{2})^{2}}
     -\frac{1}{\omega_{0}}\frac{(q_{1}+i2q_{2}\omega_{0})[q_{1}\omega_{0}+i(q_{0}-q_{2}\omega_{0}^{2})]}{(q_{1}\omega_{0})^{2}+(q_{0}-q_{2}\omega_{0}^{2})^{2}}
     +\frac{i\tau}{\omega_{0}}.\\
  \end{array}
\end{equation*}
Consequently, we can further get
\begin{equation*}
     \begin{array}{ccl}
       $\mbox{Re}$\left(\frac{d\tau}{d\zeta}\Big|_{\zeta=i\omega_{0}} \right) & = & -\frac{(\omega_{0}^{2}-p_{1})(p_{1}-3\omega_{0}^{2})+2p_{2}(p_{0}-p_{2}\omega_{0}^{2})}{[\omega_{0}(\omega_{0}^{2}-p_{1})]^{2}+(p_{0}-p_{2}\omega_{0}^{2})^{2}}
       -\frac{q_{1}^{2}-2q_{2}(q_{0}-q_{2}\omega_{0}^{2})}{(q_{1}\omega_{0})^{2}+(q_{0}-q_{2}\omega_{0}^{2})^{2}} \\
        & = & \frac{-(\omega_{0}^{2}-p_{1})(p_{1}-3\omega_{0}^{2})-2p_{2}(p_{0}-p_{2}\omega_{0}^{2})-q_{1}^{2}+2q_{2}(q_{0}-q_{2}\omega_{0}^{2})}
        {(q_{1}\omega_{0})^{2}+(q_{0}-q_{2}\omega_{0}^{2})^{2}}. \\
     \end{array}
\end{equation*}
Since
\begin{equation*}
  C_{2}=p_{2}^{2}-q_{2}^{2}-2p_{1}>0  \quad\quad\mbox{and}\quad\quad  C_{1}=p_{1}^{2}-q_{1}^{2}-2p_{2}p_{0}+2q_{2}q_{0}>0,
\end{equation*}
we conclude that
\begin{equation*}
  \begin{array}{ccl}
    \mbox{sign}\left(\frac{\mbox{d}\rm{Re}(\zeta)}{\mbox{d}\tau}\Big|_{\tau=\tau_{k}}\right) & = & \mbox{sign}\left(\mbox{Re}\left(\frac{\mbox{d}\tau}{\mbox{d}\zeta}\Big|_{\zeta=i\omega_{0}}\right)\right) \\
     & = & \mbox{sign}\left(\frac{-(\omega_{0}^{2}-p_{1})(p_{1}-3\omega_{0}^{2})-2p_{2}(p_{0}-p_{2}\omega_{0}^{2})-q_{1}^{2}+2q_{2}(q_{0}-q_{2}\omega_{0}^{2})}
        {(q_{0}-q_{2}\omega_{0}^{2})^{2}+(q_{1}\omega_{0})^{2}}\right) \\
     & = & \mbox{sign}\left(\frac{3\omega_{0}^{4}+2(p_{2}^{2}-q_{2}^{2}-2p_{1})\omega_{0}^{2}+p_{1}^{2}-q_{1}^{2}-2p_{2}p_{0}+2q_{2}q_{0}}
        {(q_{0}-q_{2}\omega_{0}^{2})^{2}+(q_{1}\omega_{0})^{2}}\right)>0. \\
  \end{array}
\end{equation*}
\end{proof}
\noindent
Summarizing the results presented above, we derive the following theorem.
\begin{theorem}\label{HopfBifurcation}
Let Assumption \ref{assumption1} and \ref{assumption2} be satisfied. Then there exist $\tau_{k}>0, k=1,2,\cdots$($\tau_{k}$ is defined in (\ref{tauk})), such that when $\tau=\tau_{k}$, the HIV model (\ref{system}) undergoes a Hopf bifurcation at the equilibrium $\left(\overline{T}, \overline{V}, \overline{i}_{\tau_{k}}(a)\right)$. In particular, a non-trivial periodic solution bifurcates from the equilibrium $\left(\overline{T}, \overline{V}, \overline{i}_{\tau_{k}}(a)\right)$ when $\tau=\tau_{k}$.
\end{theorem}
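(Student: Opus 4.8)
The plan is to apply the Hopf bifurcation theorem for abstract non-densely defined Cauchy problems of Liu, Magal and Ruan \cite{LiuMagalRuan-ZAMP-2011} to the translated, rescaled system (\ref{fracdytdt}), $\frac{dy(t)}{dt}=B_{\tau}y(t)+\mathcal{H}(y(t))$, treating $\tau$ as the bifurcation parameter, and then to transport the resulting branch of periodic orbits back through $y=x-\overline{x}_{\tau}$ and the age/time rescaling $\hat a=a/\tau$, $\hat t=t/\tau$ to the original model (\ref{system}). Concretely, I would verify in turn: (i) the regularity hypotheses; (ii) the Hille--Yosida and essential-spectrum hypotheses on the linear part; (iii) the spectral hypotheses at $\tau=\tau_k$, namely a pair of algebraically simple purely imaginary eigenvalues and no other spectrum on the imaginary axis; and (iv) the transversality condition. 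The proof is then an assembly of the lemmas established above together with one invocation of the abstract theorem.

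For (i), the rescaling carried out in Section~2 makes $A_{\tau}$ and the nonlinearity $\tau H$ depend on $\tau$ smoothly, and the positive equilibrium $\overline{x}_{\tau}$ from the equilibrium lemma is a smooth branch in $\tau$; since $H$ evaluated at $(0_{\mathbb{R}^{3}},\varphi)$ depends on $\varphi$ only through the finitely many bounded linear functionals $\int_{0}^{\infty}\rho$, $\int_{0}^{\infty}v$, $\int_{0}^{\infty}i$, $\int_{0}^{\infty}\beta(a)i(a)\,da$ and is polynomial (quadratic) in them, the map $\mathcal{H}$ is $C^{k}$ on $X_{0}=\overline{D(A_{\tau})}$ for every $k$, with $\mathcal{H}(0)=0$ and $D\mathcal{H}(0)=0$ as already recorded after (\ref{fracdytdt}). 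For (ii), the preceding lemmas give that $A_{\tau}$ is a Hille--Yosida operator, non-densely defined ($X_{0}\neq X$), that $\tau DH(\overline{x}_{\tau})$ is a compact bounded perturbation, and hence that $B_{\tau}$ is Hille--Yosida with $\omega_{0,ess}((B_{\tau})_{0})<0$; this is exactly the exponential-trichotomy hypothesis and it forces the spectrum of $B_{\tau}$ in any half-plane $\{\mathrm{Re}\,\lambda\geq-\varepsilon\}$ to consist of finitely many eigenvalues of finite algebraic multiplicity.

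For (iii), the resolvent computation and the characteristic equation (\ref{characteristicequation})--(\ref{characteristicequationg}) identify $\sigma(B_{\tau})\cap\Omega$ with $\{\lambda=\tau\zeta:g(\zeta)=0\}$, the algebraic multiplicity of $\lambda$ as an eigenvalue of $B_{\tau}$ coinciding with the multiplicity of $\zeta$ as a zero of $g$. Under Assumptions \ref{assumption1}--\ref{assumption2} I would argue: since $D=0$ and $q>0$, the cubic (\ref{theta2}) has one double root $\theta_{0}=-\sqrt[3]{-q}>0$ and one simple root which must be negative, because $\theta_{1}\theta_{2}\theta_{3}=-C_{0}=-(p_{0}+q_{0})(p_{0}-q_{0})<0$; hence (\ref{omega34}) has $\omega_{0}=\sqrt{\theta_{0}}$ as its unique positive root, so $g$ has no purely imaginary zero other than $\pm i\omega_{0}$, and $\zeta=0$ is not a zero since $p_{0}+q_{0}>0$. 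At $\tau=\tau_{k}$ as in (\ref{tauk}) the resolution of (\ref{realimaginary}) makes $\pm i\omega_{0}$ zeros of $g$, so $\pm i\tau_{k}\omega_{0}$ are the only points of $\sigma(B_{\tau_{k}})$ on the imaginary axis; by the lemma establishing $\frac{dg(\zeta)}{d\zeta}\big|_{\zeta=i\omega_{0}}\neq 0$ they are algebraically simple. Finally, the transversality lemma gives $\frac{d}{d\tau}\mathrm{Re}\,\zeta(\tau)\big|_{\tau=\tau_{k}}>0$, so the eigenvalue pair crosses the imaginary axis with nonzero speed.

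Having checked (i)--(iv), the Hopf bifurcation theorem of \cite{LiuMagalRuan-ZAMP-2011} applies and yields, for each $k$, a local branch of non-constant periodic solutions of (\ref{fracdytdt}) bifurcating from $y\equiv 0$ as $\tau$ passes through $\tau_{k}$; adding back $\overline{x}_{\tau}$ gives periodic solutions of (\ref{nonddaCp}), and undoing the age/time scaling (which multiplies the period by $\tau_{k}$) produces the asserted non-trivial periodic solution of (\ref{system}) bifurcating from $(\overline{T},\overline{V},\overline{i}_{\tau_{k}}(a))$. I expect the principal difficulty to lie not in any single calculation but in making this concrete setting fit the abstract theorem's hypotheses exactly --- in particular, establishing that the imaginary axis carries no spectrum of $B_{\tau_{k}}$ beyond $\pm i\tau_{k}\omega_{0}$ (which is precisely what uniqueness of the positive root $\theta_{0}$, via $D=0$, $q>0$ and $C_{0}>0$, is invoked for) and that \emph{algebraic} simplicity as an eigenvalue of $B_{\tau_{k}}$ (not merely simplicity as a root of $g$) holds, through the multiplicity correspondence together with $g'(i\omega_{0})\neq 0$; the smoothness and Hille--Yosida conditions are, by contrast, routine.
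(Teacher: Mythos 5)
Your proposal is correct and follows essentially the same route as the paper: the paper's proof of Theorem \ref{HopfBifurcation} is precisely the assembly of the preceding lemmas (Hille--Yosida property of $A_{\tau}$, $\omega_{0,ess}((B_{\tau})_{0})<0$, the characteristic equation with the simple purely imaginary roots $\pm i\omega_{0}$ at $\tau=\tau_{k}$, and the transversality condition $\alpha'(\tau_{k})>0$) followed by one invocation of the Hopf bifurcation theorem of \cite{LiuMagalRuan-ZAMP-2011}, exactly as you outline. Your additional explicit checks (smoothness of $\mathcal{H}$, exclusion of other imaginary-axis spectrum via the sign of $C_{0}$ and $p_{0}+q_{0}>0$, and the multiplicity correspondence between zeros of $g$ and eigenvalues of $B_{\tau_{k}}$) are details the paper leaves implicit but are consistent with its argument.
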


\section{Numerical simulations and conclusions}

\noindent

In this section, we perform a numerical analysis of the model (\ref{system}) based on the previous results. We choose a set of parameters as follows: $\Lambda=0.05, r=0.95, \mu=0.0002, \sigma=0.09, c=0.4, K=50, N=30, \beta_{1}=0.00027,\beta_{2}=0.027$. System (\ref{system}) becomes
\begin{figure}[hbt]
\centering
{
\subfigure [] {\includegraphics[width=2in,clip]{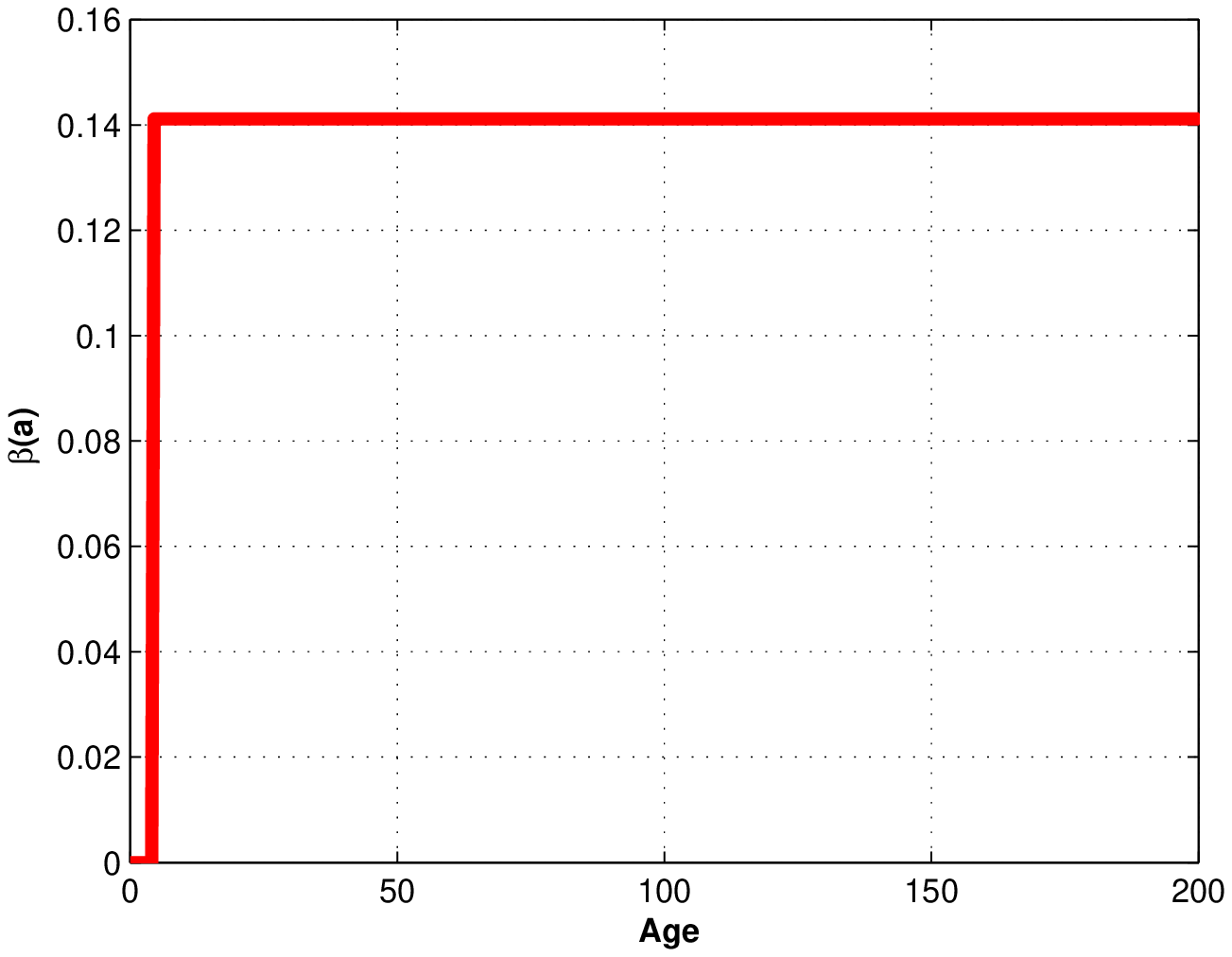} }
\subfigure [] {\includegraphics[width=2in,clip]{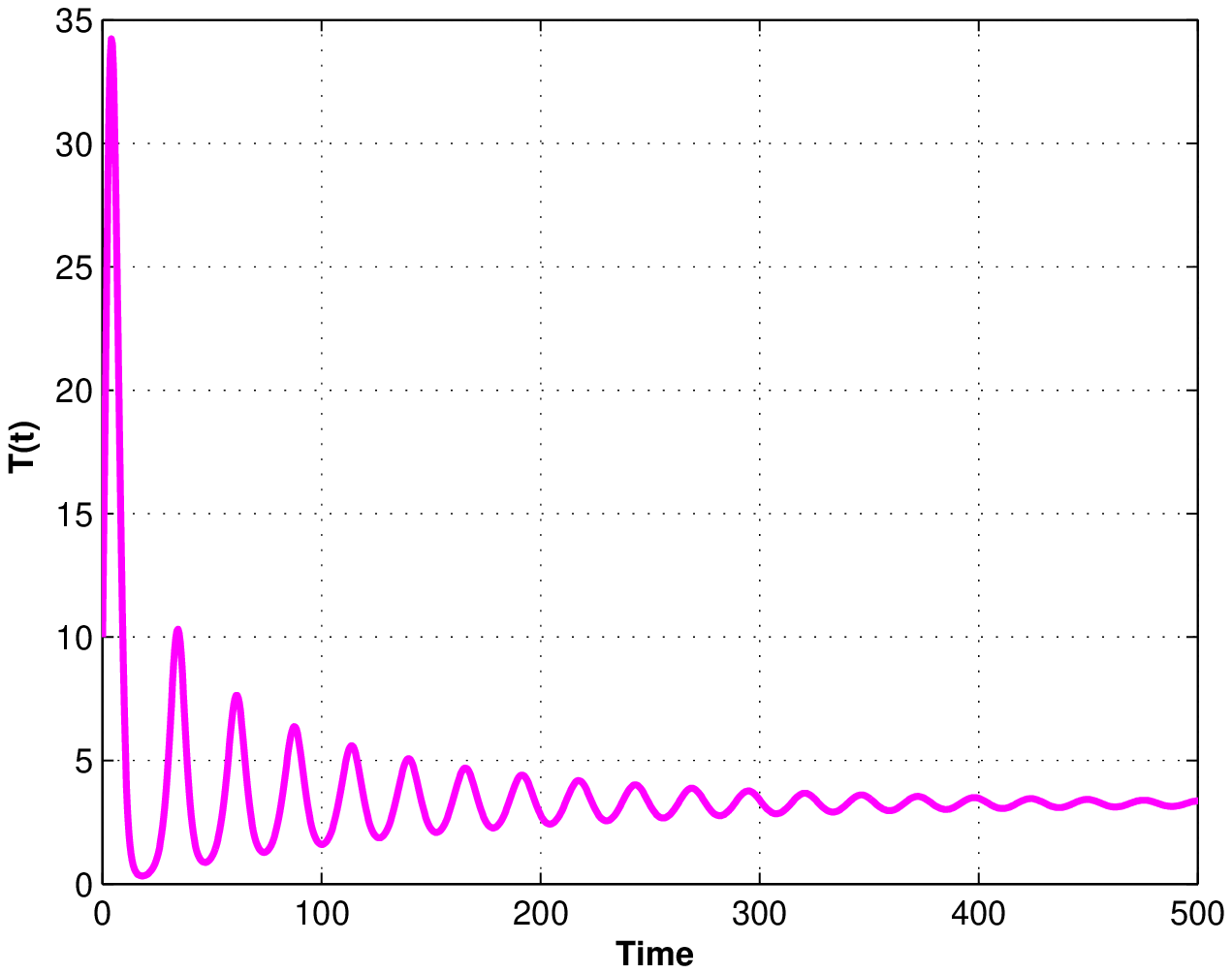} }
\subfigure [] {\includegraphics[width=2in,clip]{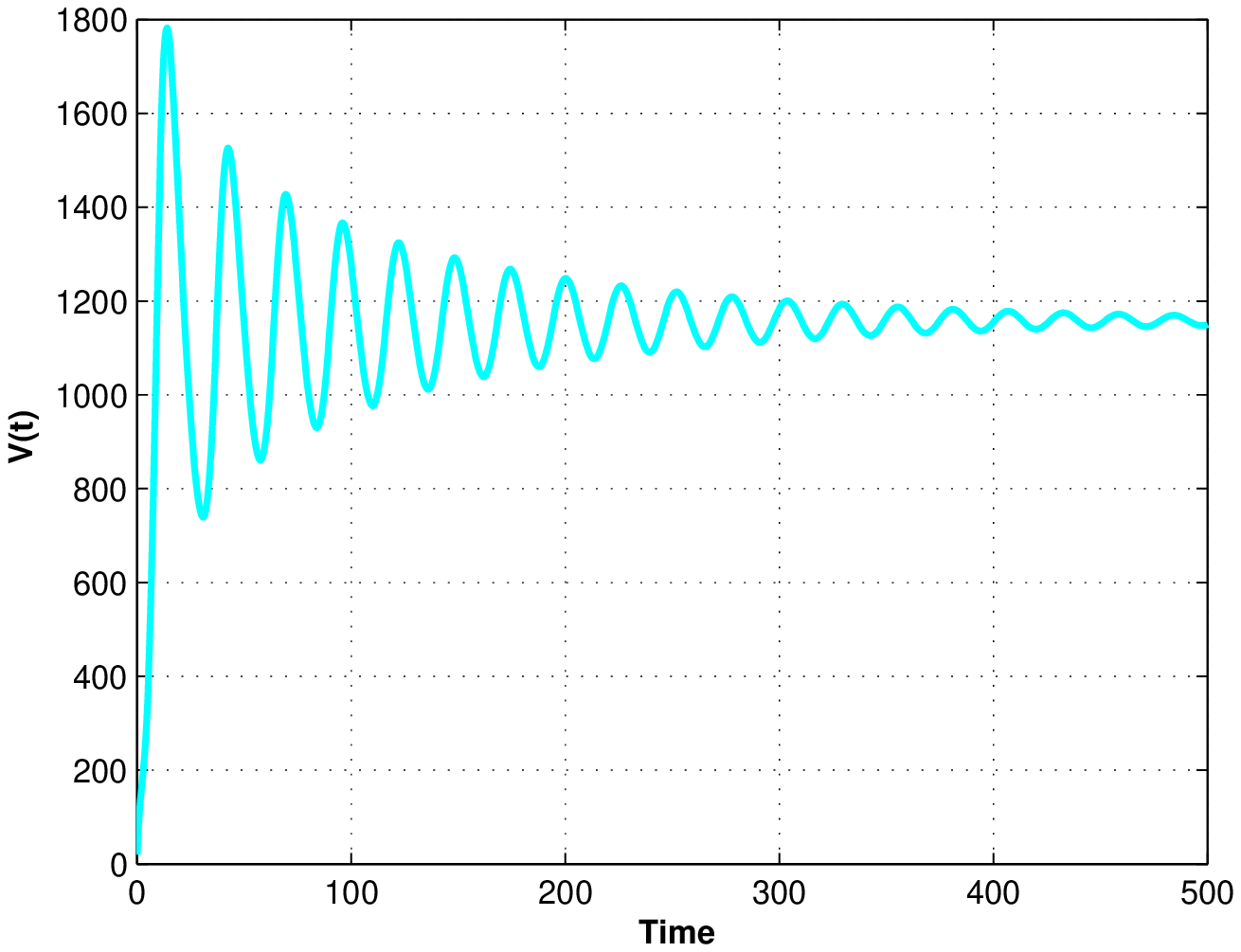} }
\subfigure [] {\includegraphics[width=2in,clip]{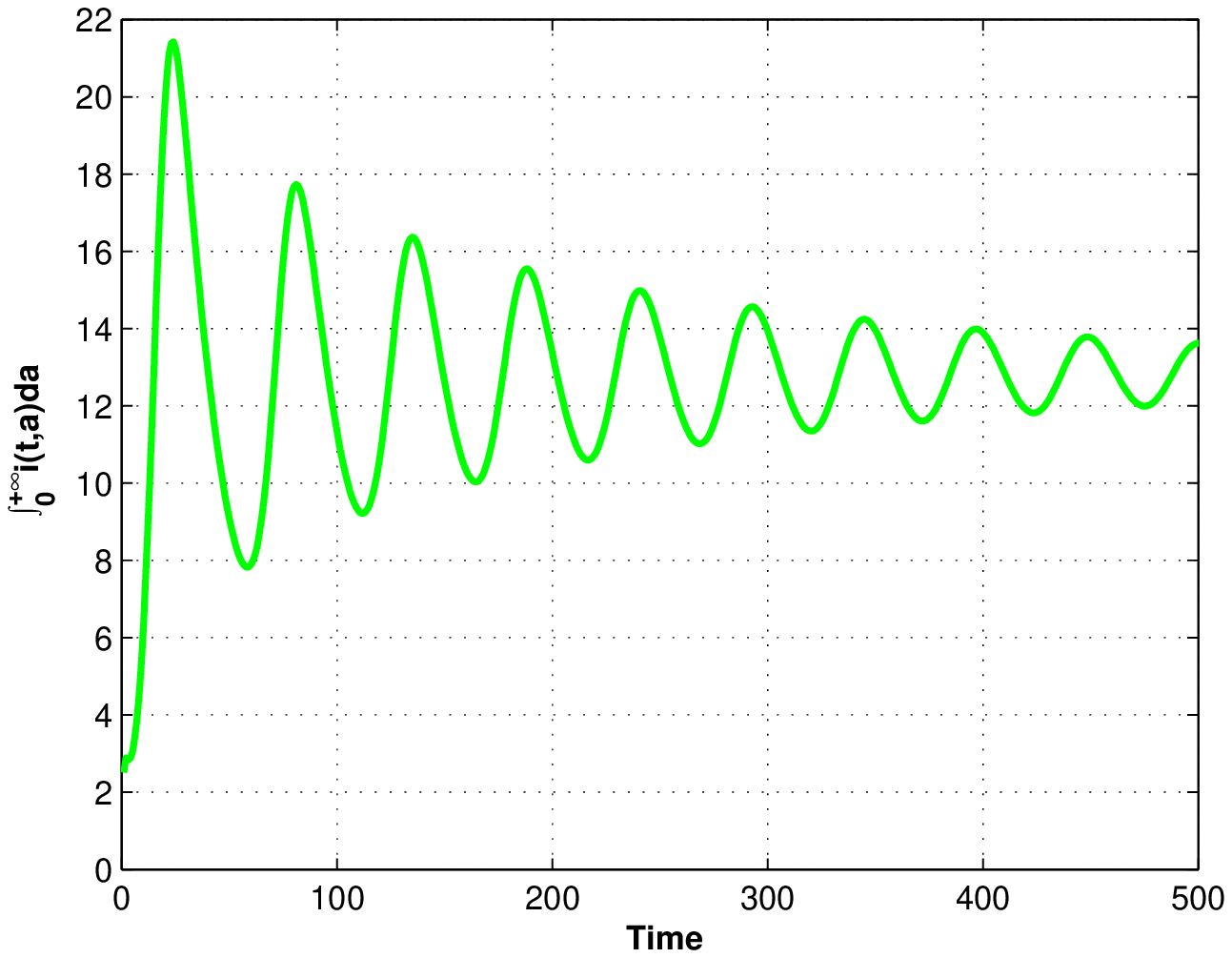} }
\subfigure [] {\includegraphics[width=2in,clip]{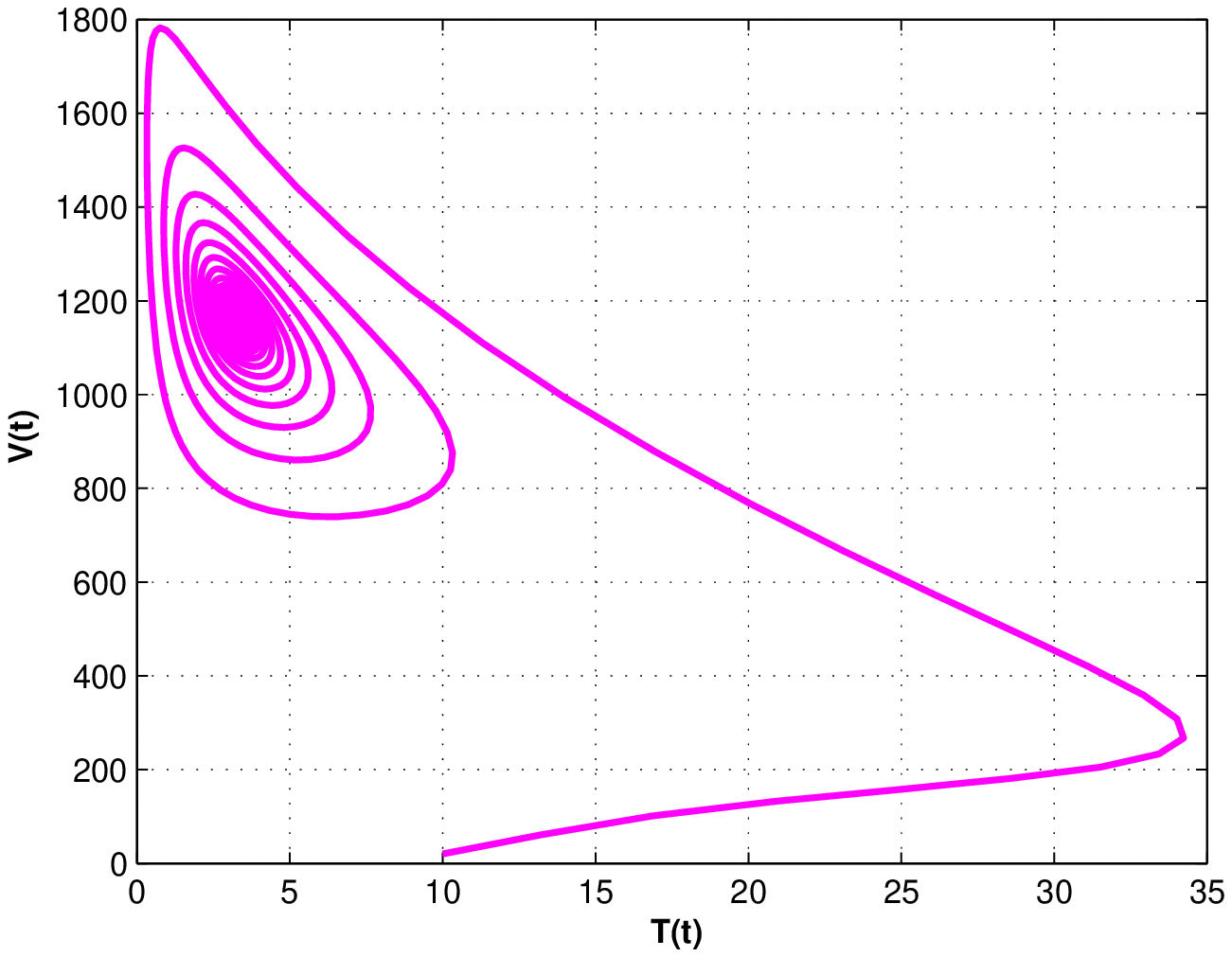} }
\subfigure [] {\includegraphics[width=2in,clip]{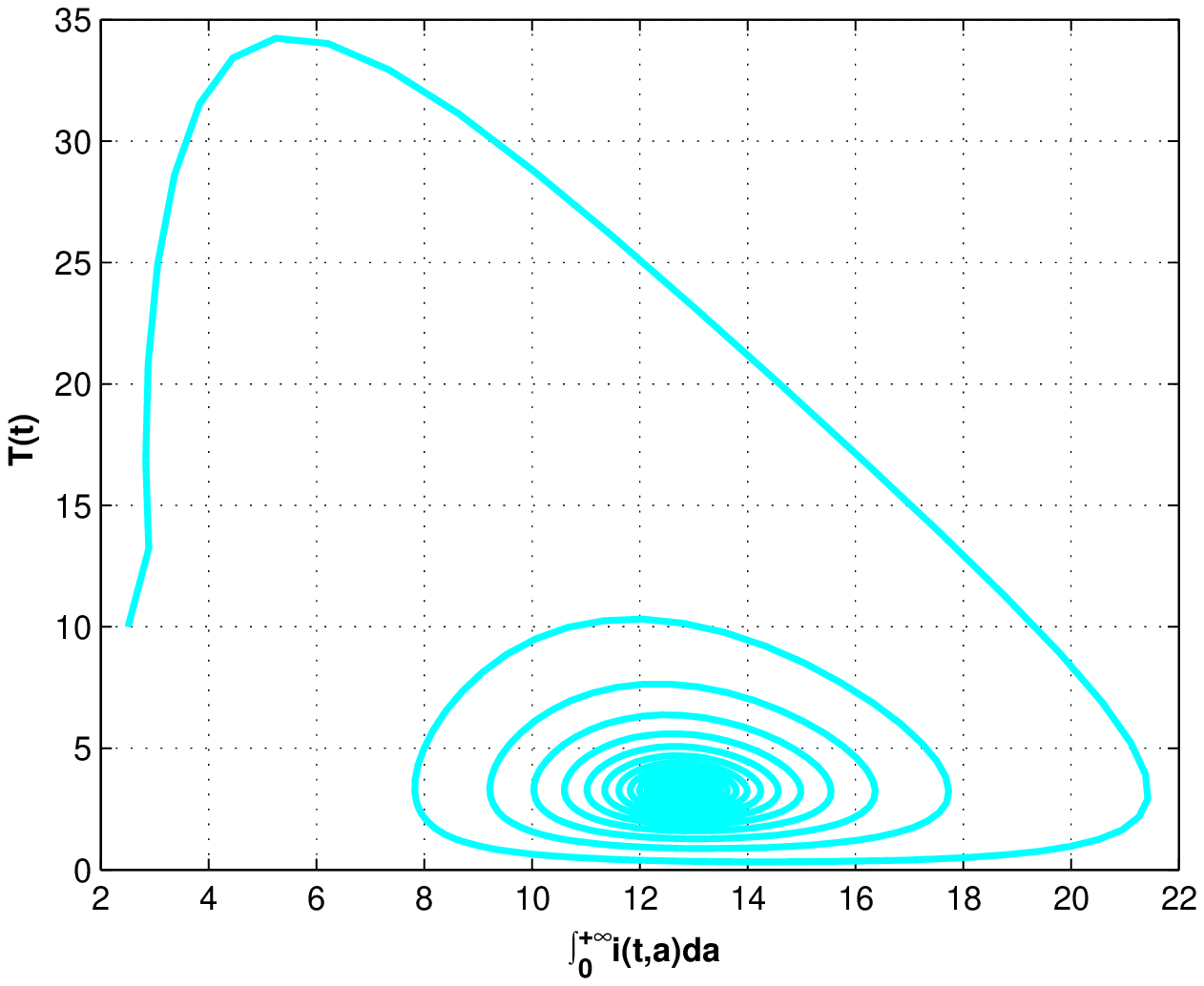} }
\subfigure [] {\includegraphics[width=2in,clip]{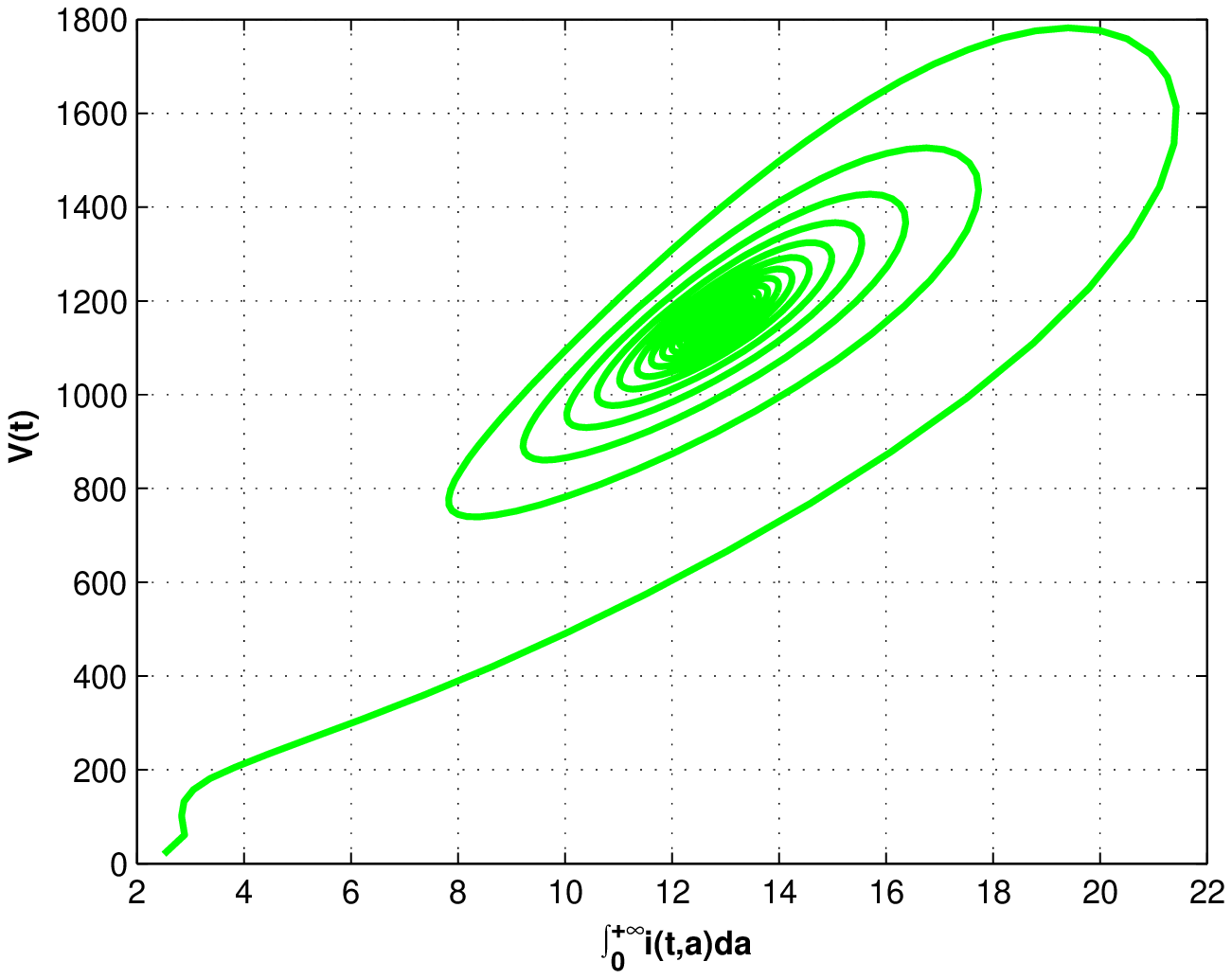} }
\subfigure [] {\includegraphics[width=2in,clip]{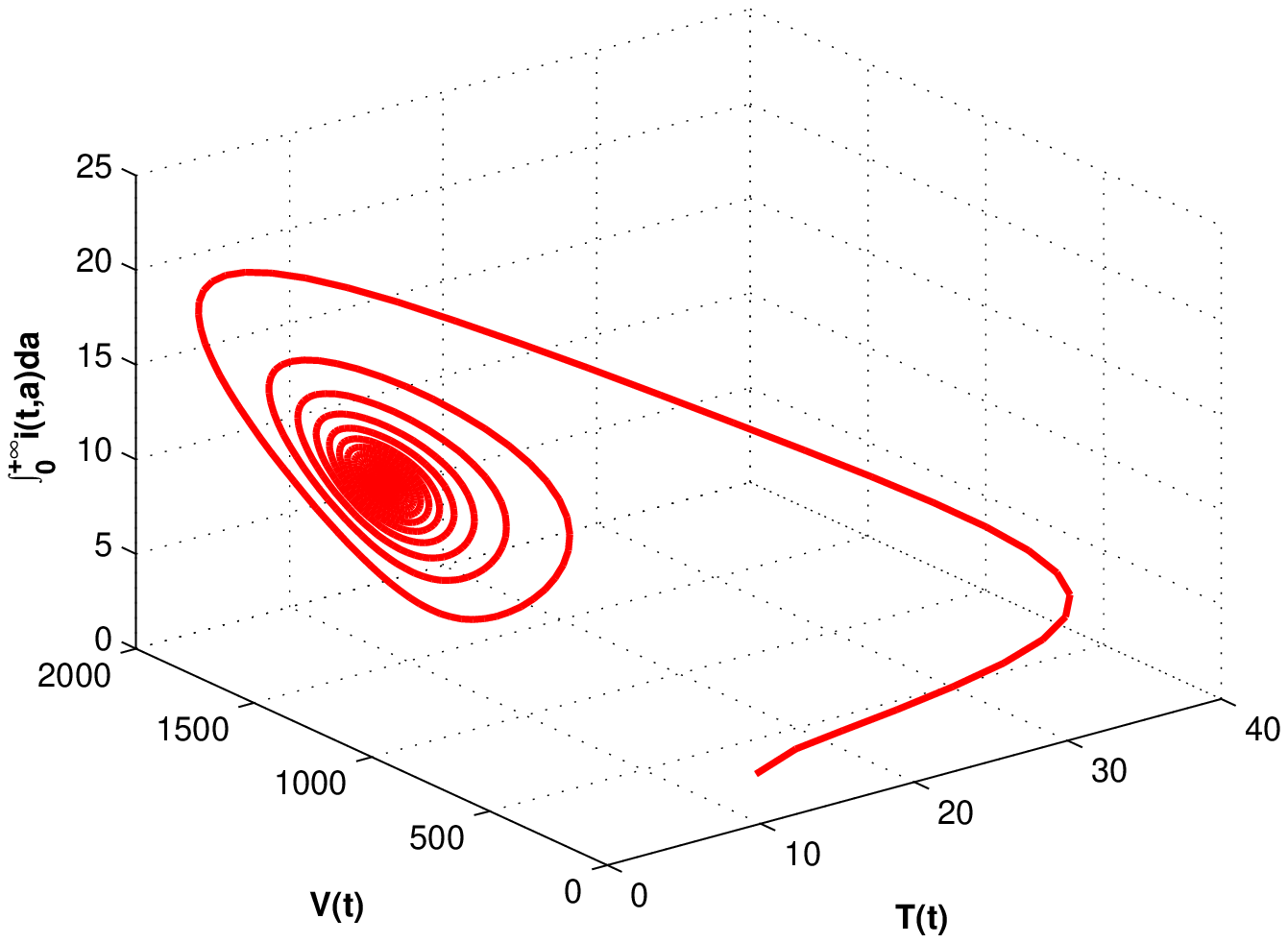} }
\subfigure [] {\includegraphics[width=2in,clip]{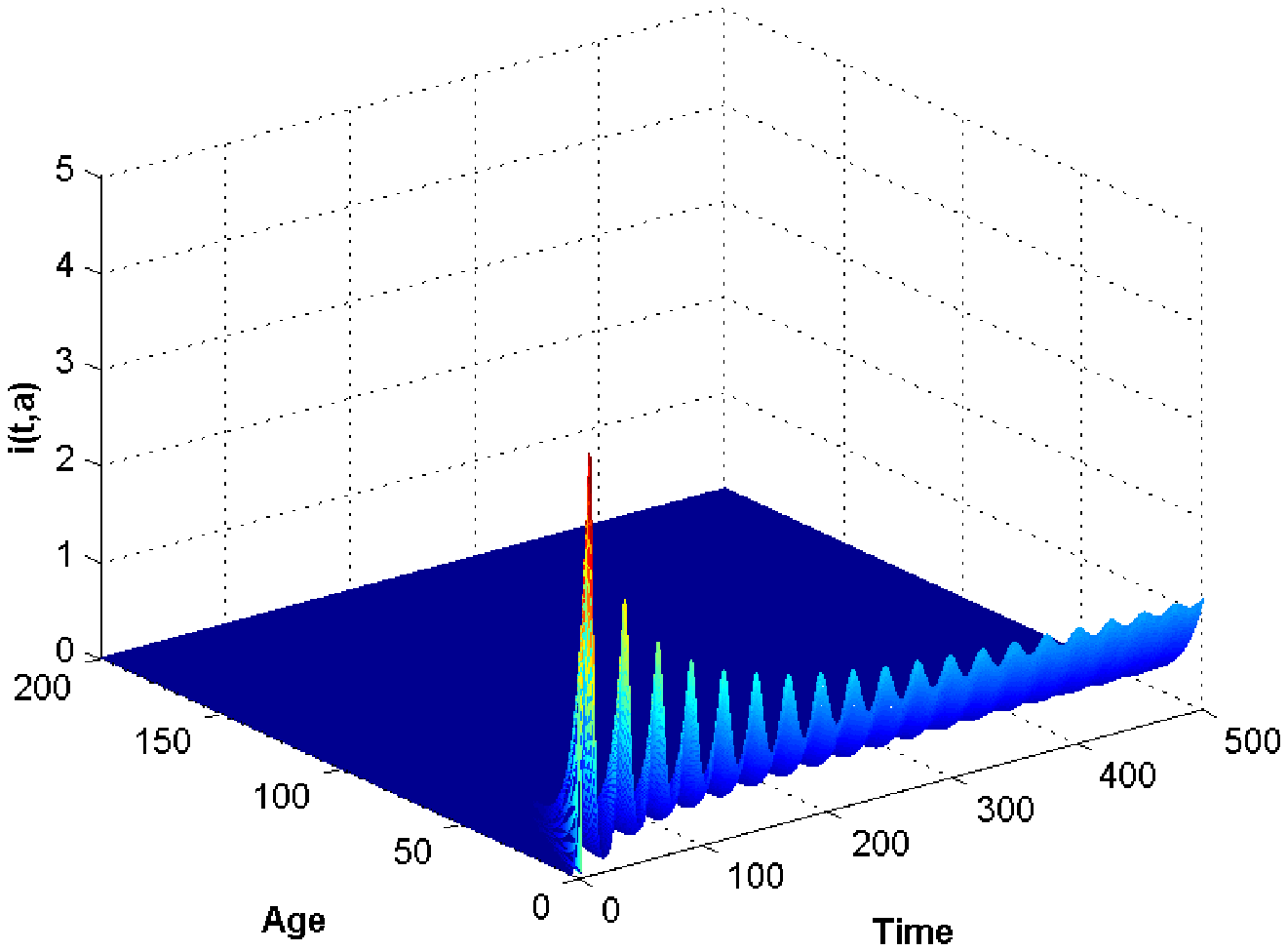} }
}
\caption{\footnotesize{Description of the evolution of system (\ref{system}) when $\tau=5$.}}\label{solution5}
\end{figure}

\begin{figure}[ht]
\centering
{
\subfigure [] {\includegraphics[width=2in,clip]{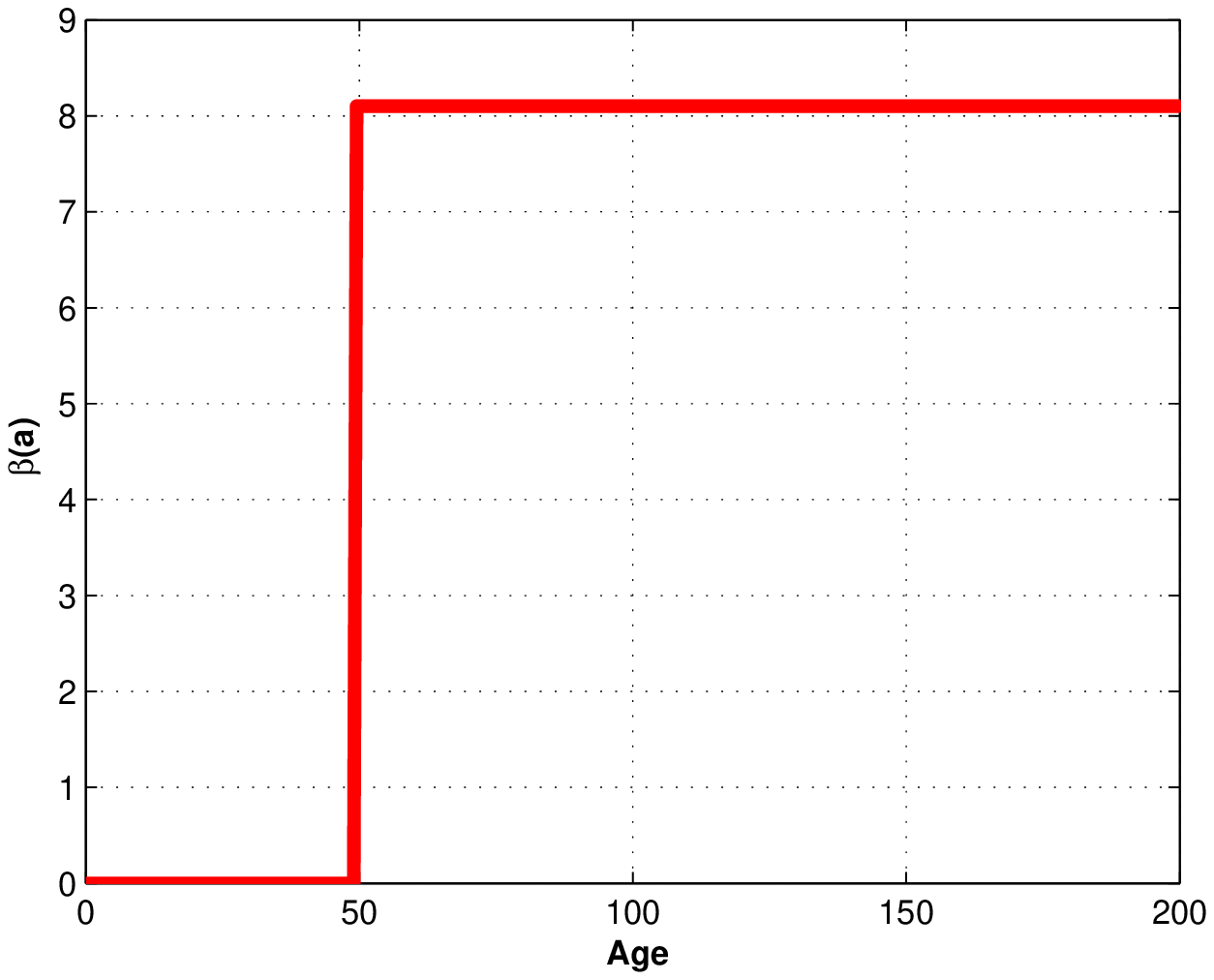} }
\subfigure [] {\includegraphics[width=2in,clip]{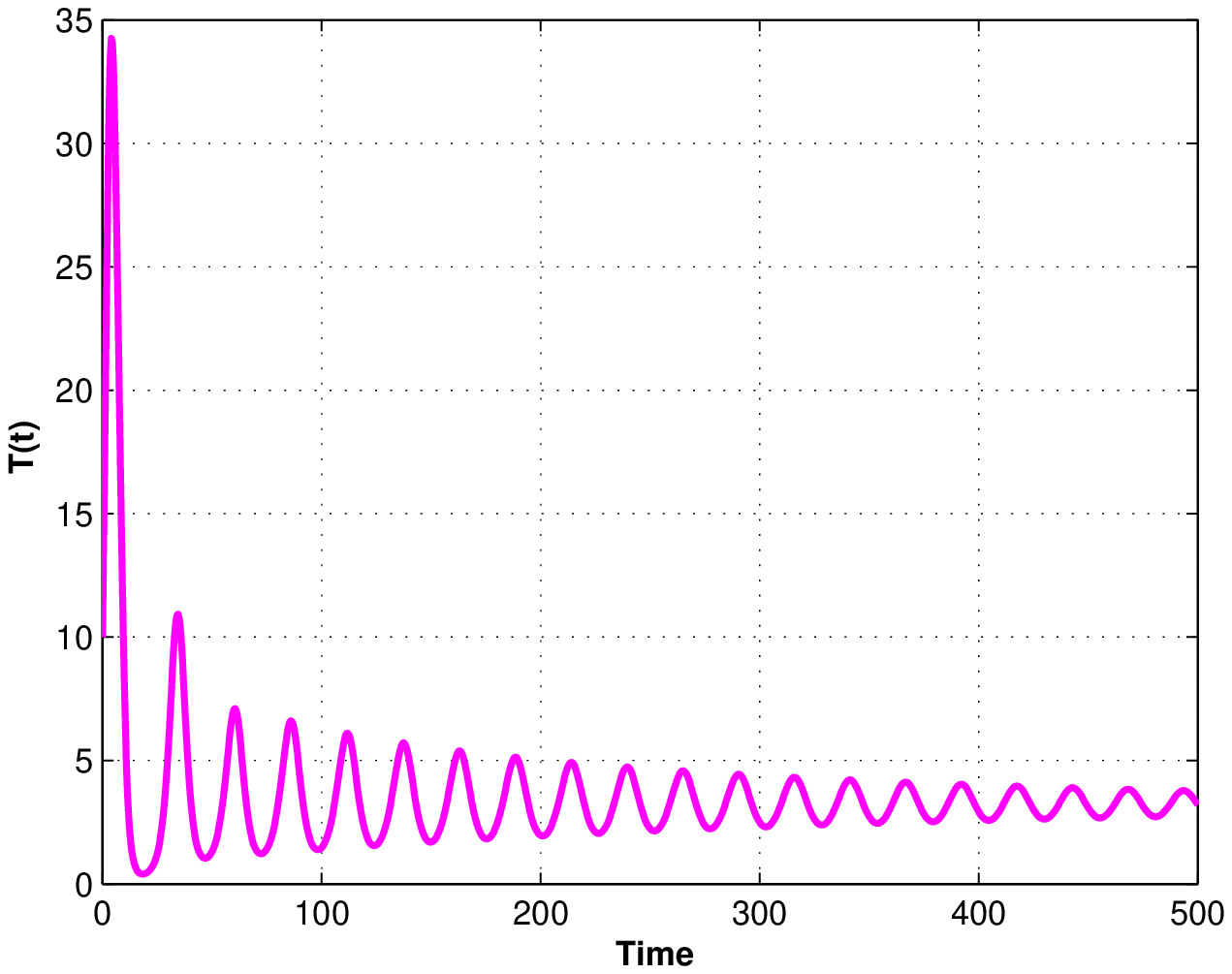} }
\subfigure [] {\includegraphics[width=2in,clip]{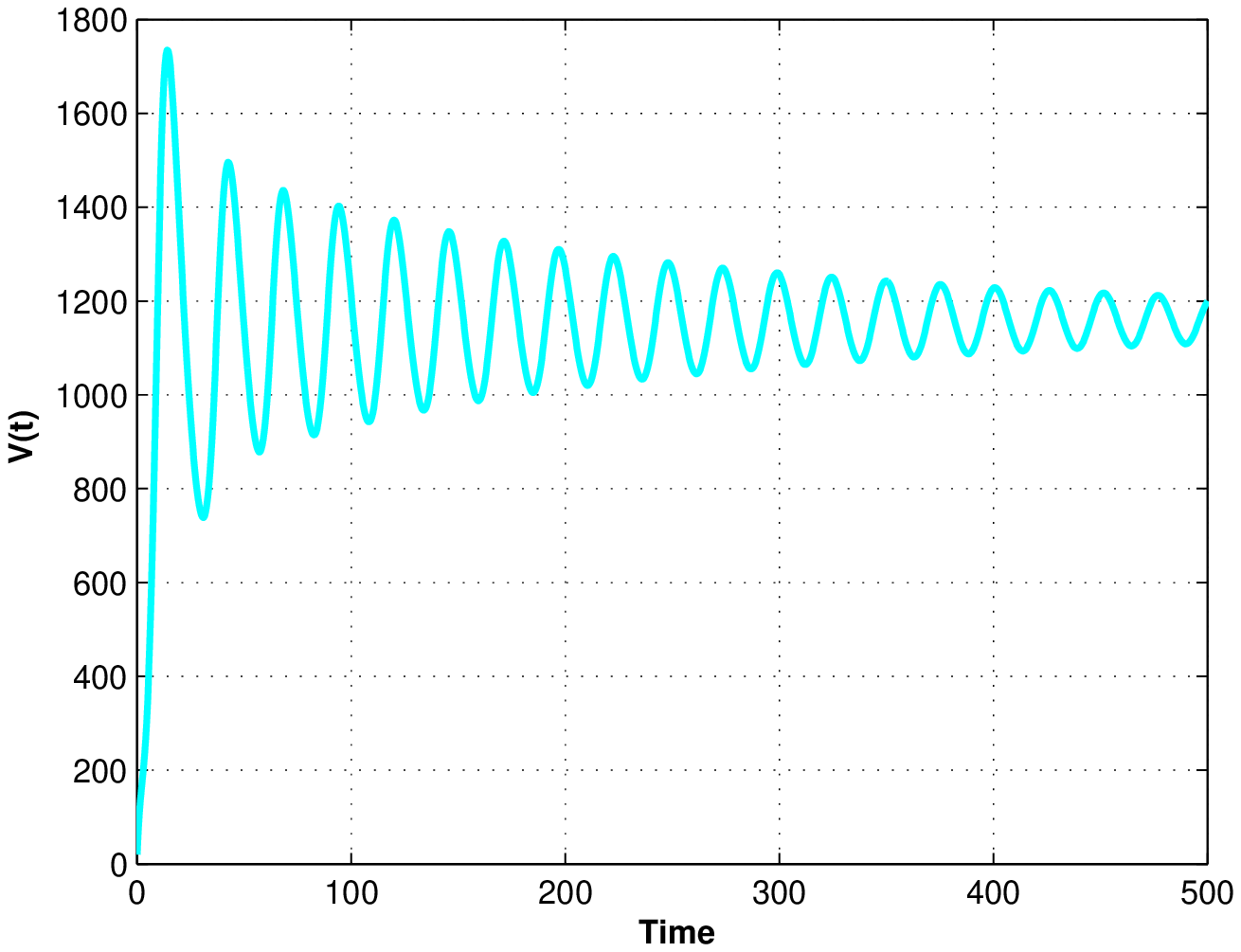} }
\subfigure [] {\includegraphics[width=2in,clip]{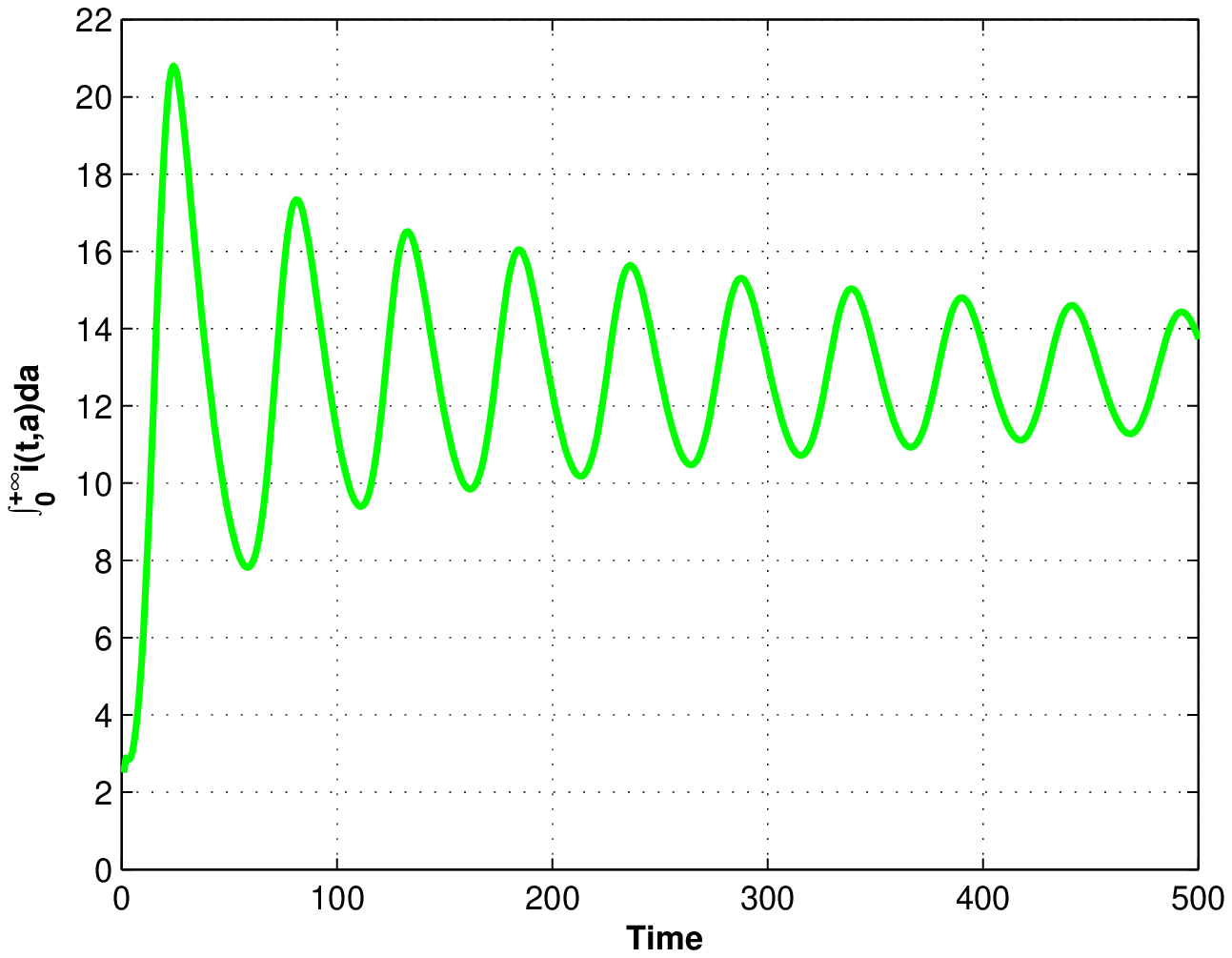} }
\subfigure [] {\includegraphics[width=2in,clip]{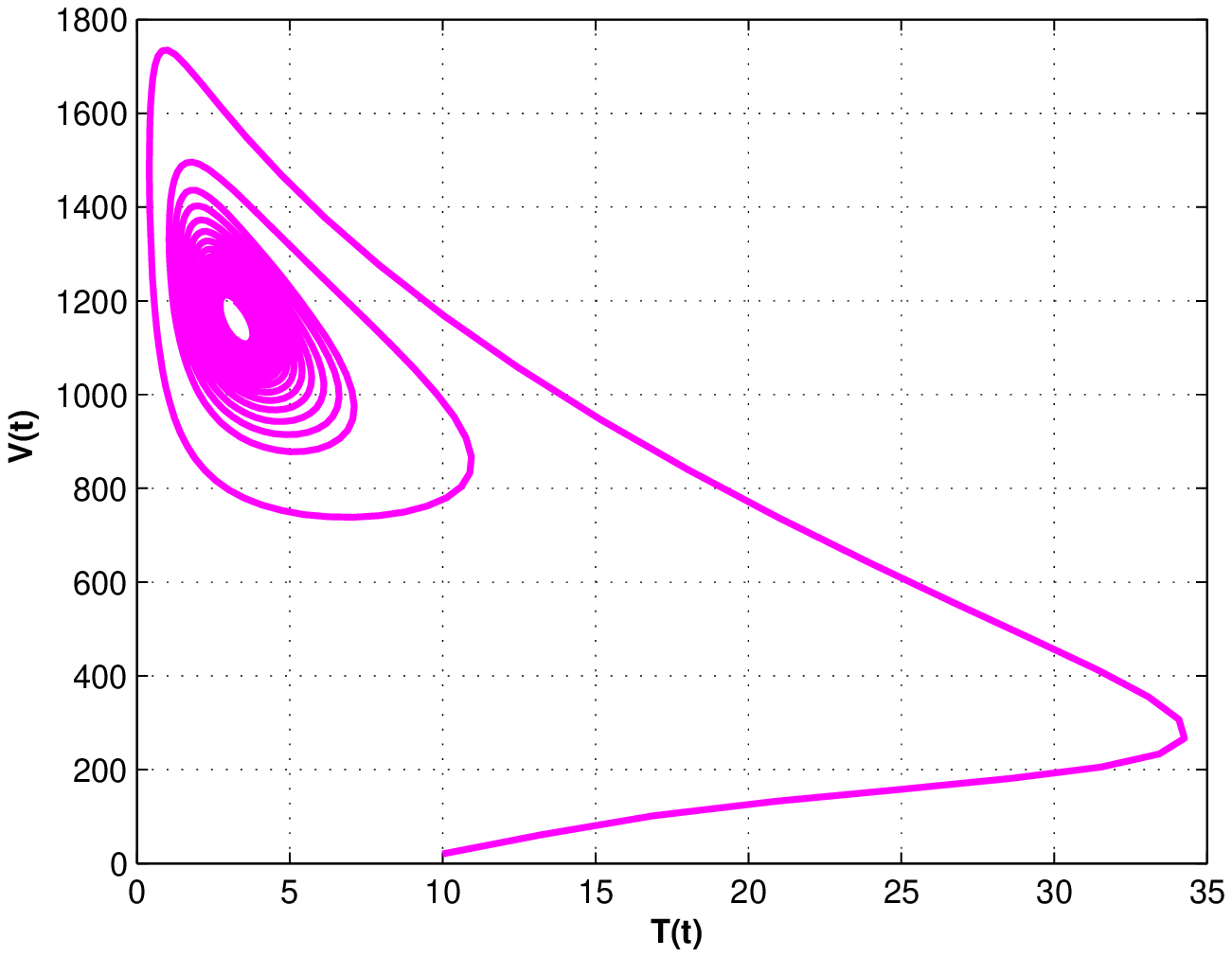} }
\subfigure [] {\includegraphics[width=2in,clip]{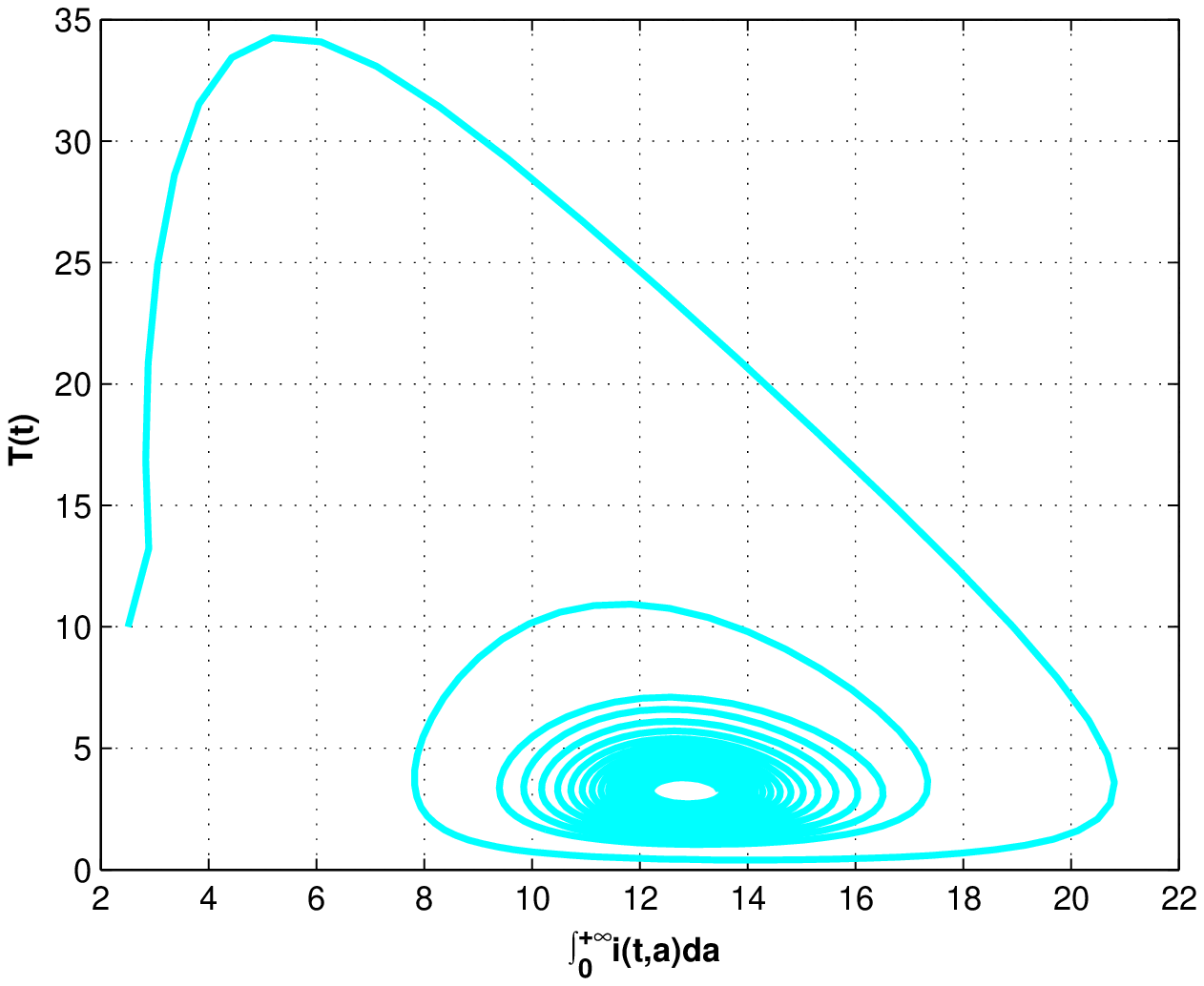} }
\subfigure [] {\includegraphics[width=2in,clip]{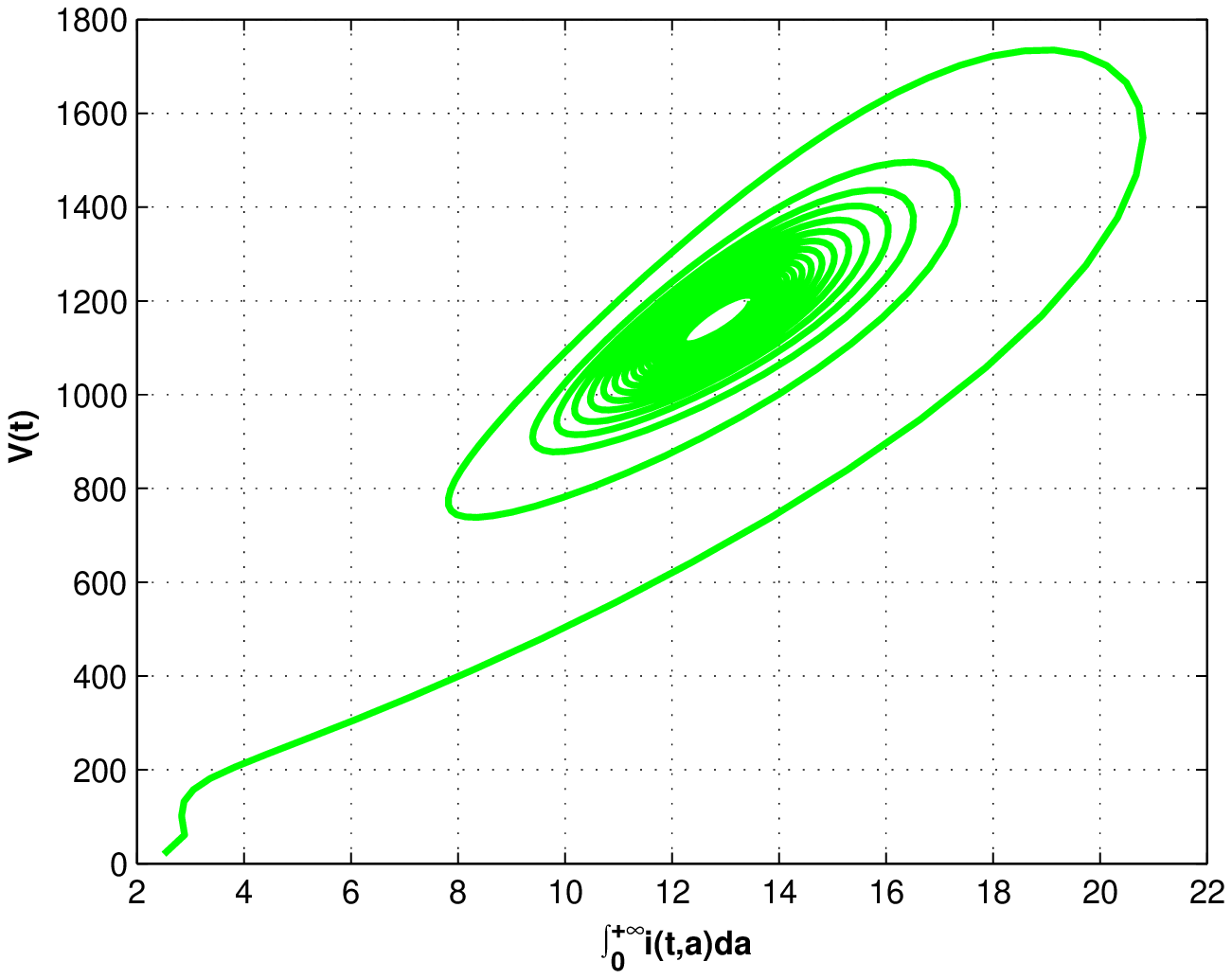} }
\subfigure [] {\includegraphics[width=2in,clip]{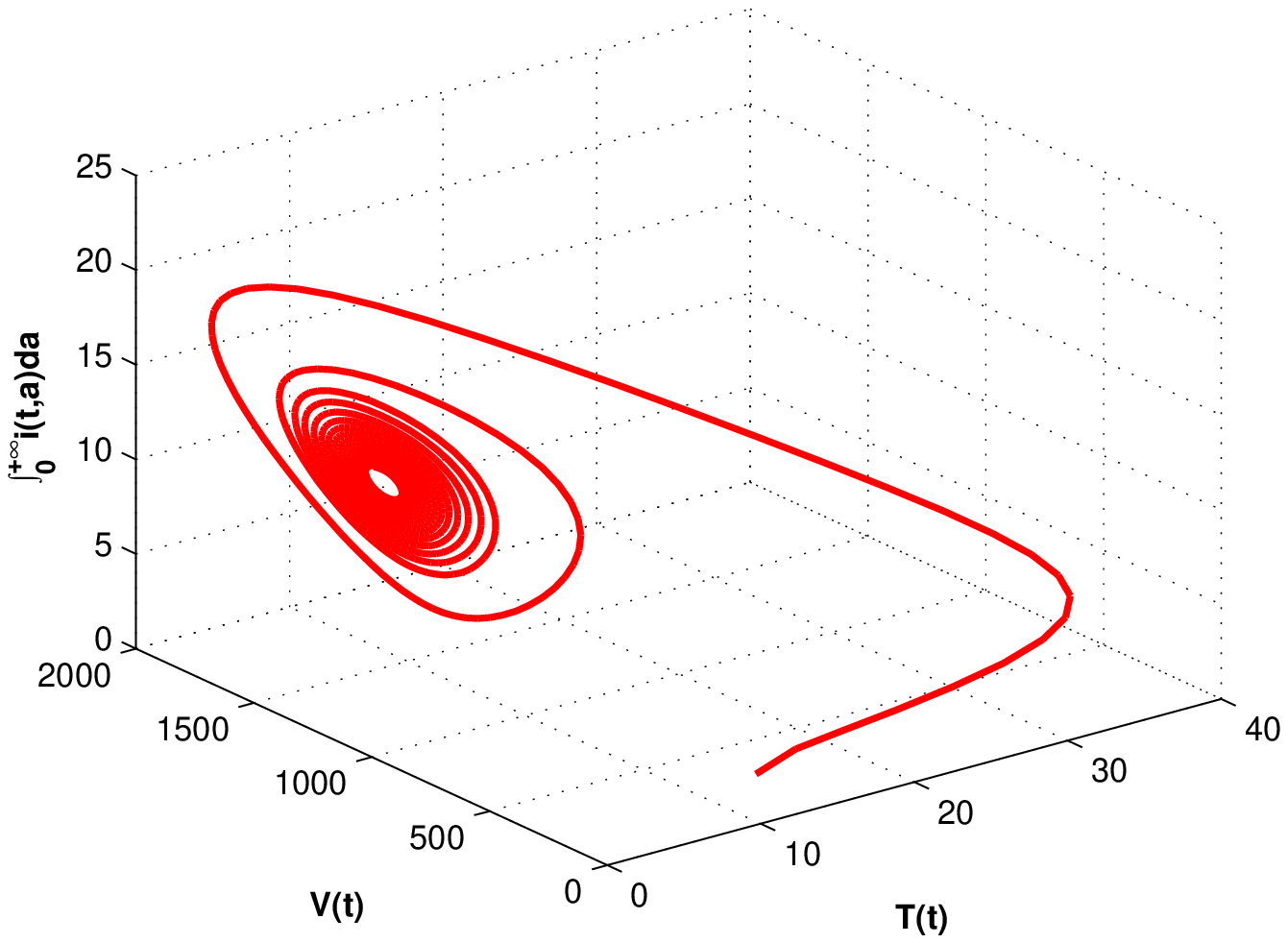} }
\subfigure [] {\includegraphics[width=2in,clip]{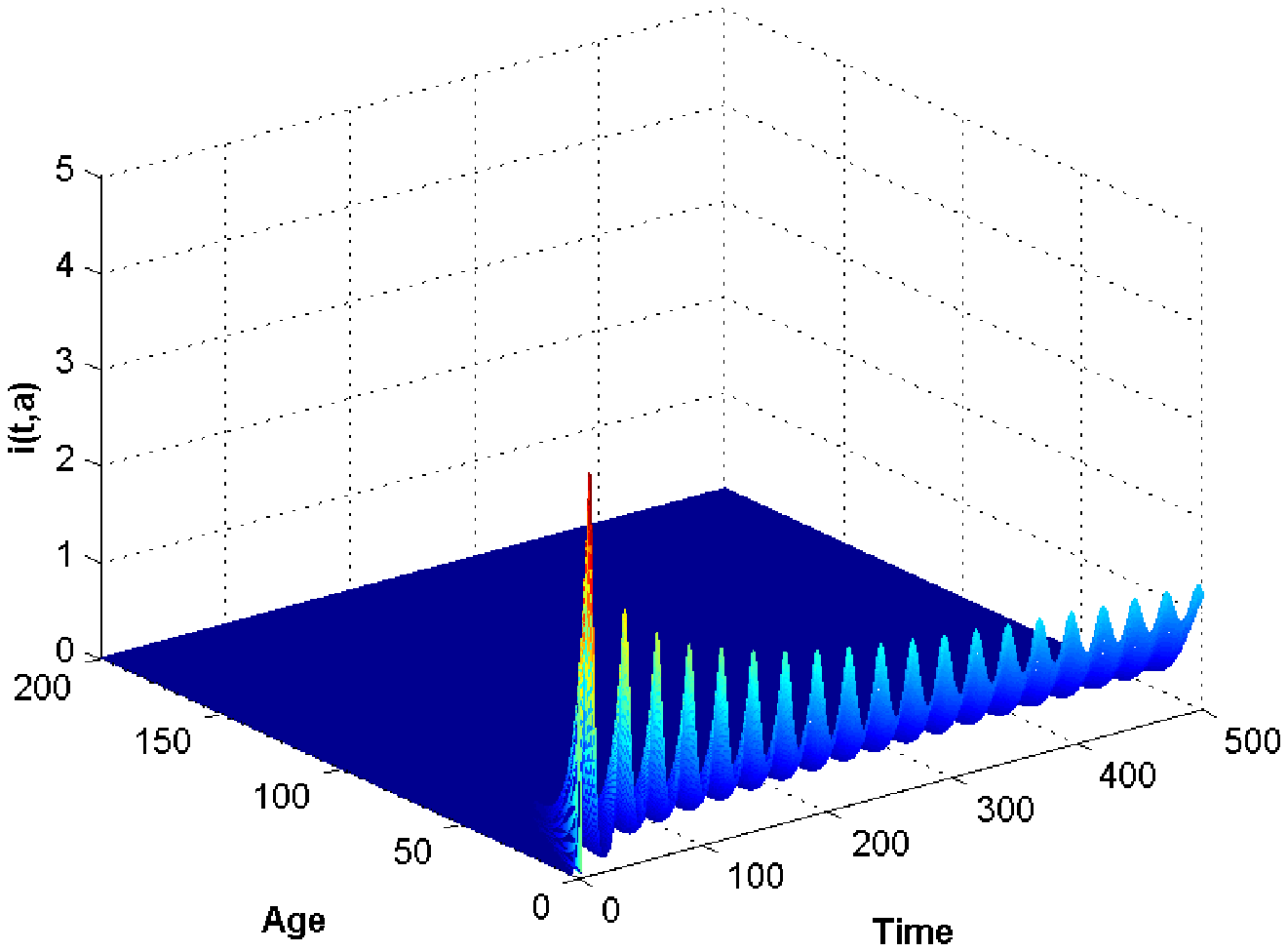} }
}
\caption{\footnotesize{Description of the evolution of system (\ref{system}) when $\tau=50$.}}\label{solution50}
\end{figure}

\begin{equation}
\left\{
\begin{array}{l}
\frac{dT(t)}{dt}=0.05-0.0002 T(t)+0.95T(t)\left(1-\frac{T(t)+\int_{0}^{+\infty}{i(t,a)da}}{50}\right)-0.00027 T(t)V(t)\\
\quad\quad\quad-0.027 T(t)\int_{0}^{+\infty}{i(t,a)da},\\
\frac{dV(t)}{dt}=30\cdot0.09\int_{0}^{+\infty}{i(t,a)da}-0.4V(t),\\
\frac{\partial i(t,a)}{\partial t}+\frac{\partial i(t,a)}{\partial a} = -0.09 i(t,a),\\
i(t,0)=0.00027 T(t)V(t)+0.027 T(t)\int_{0}^{+\infty}{\beta(a)i(t,a)da}, t>0,\\
T(0)=10\geq0, V(0)=20\geq0, i(0,\cdot)=2e^{-\frac{a^{2}}{2}}\in L_{+} ^{1}((0, + \infty ),\mathbb{R}),
\end{array}
\right.
\end{equation}
where
\begin{equation*}
  \beta(a):=\left\{
               \begin{array}{cl}
                 0.09 e^{0.09\tau}, &\quad \mbox{if} \quad a\geq \tau, \\
                 0, & \quad \mbox{if} \quad a\in (0,\tau). \\
               \end{array}
             \right.
\end{equation*}

By using the Matlab, we calculate that $K(N\beta_{1}+c\beta_{2})[\Lambda(N\beta_{1}+c\beta_{2})+c(r-\mu)]-c^{2}r\approx0.2079$, $D\approx2.8735\times 10^{-8}, q\approx1.7436\times 10^{-4}$, $C_{2}\approx0.2647, C_{1}\approx0.0198$. It is obvious that the conditions of Assumption \ref{assumption2} can be satisfied. Calculating it further, we can easily obtain that $\omega_{0}\approx0.2364$ and critical value $\tau_{1}\approx9.3906$.

For the above parameters, we draw the graph of the solution curve and phase trajectory of model (\ref{system}) and the graph of $i(t,a)$ with respect to age and time (horizontal axis) by software Matlab when $\tau=5<\tau_{1}$ (Figure \ref{solution5}) and $\tau=50>\tau_{1}$ (Figure \ref{solution50}). One can see that positive equilibrium $\left(\overline{T}, \overline{V}, \overline{i}_{\tau=5}(a)\right)=\left(21.1640, 77.6373,  5.1758e^{-0.45a}\right)$ is locally asymptotically stable when $\tau=5\in[0,\tau_{1})$.
By using Theorem \ref{HopfBifurcation}, we know that, under the set parameters, when $\tau=\tau_{1}$, the HIV model (\ref{system}) undergoes a Hopf bifurcation at the equilibrium $\left(\overline{T}, \overline{V}, \overline{i}_{\tau_{1}}(a)\right)$. As is shown in Figure \ref{solution50}, when bifurcation parameter $\tau $
crosses the bifurcation critical value $\tau _{1}$, the sustained periodic
oscillation phenomenon appears around the positive equilibrium $\left(\overline{T}, \overline{V}, \overline{i}_{\tau=50}(a)\right)=\left(21.1640, 77.6373,  51.7582e^{-4.5a}\right)$. Biologically speaking, for
smaller biological maturation period $\tau $, the stability of the unique positive equilibrium of system (%
\ref{system}) is barely  affected. However, when maturation period $\tau$ increases continuously, the dynamical behavior of system (\ref{system}) will be resulted in substantial changes. In conclusion, the bifurcation parameter $\tau $, a measure of a biological maturation period, has an essential impact on the dynamical behavior of system (\ref{system}).

\end{document}